\newtheorem{theo}{Theorem}[section]
\newtheorem{prop}{Proposition}[section]
\newtheorem{lem}{Lemma}[section]
\newtheorem{rem}{Remark}[section]
\newcommand{\N}{{\mathbf{N}}}
\newcommand{\RR}{{\mathcal{R}}}
\title{Stationary motion of a self-gravitating toroidal incompressible liquid layer}
\author{Giorgio Fusco,\footnote{Dipartimento di Matematica Pura ed Applicata, Universit\`a degli Studi dell'Aquila, Via Vetoio, 67010 Coppito, L'Aquila, Italy; e-mail:{\texttt{fusco@univaq.it}}}\ \  Piero Negrini\footnote{Dipartimento di Matematica,   Sapienza, Universit\`a di Roma, Piazzale Aldo Moro 5, 00185 Rome, Italy; e-mail: {\texttt{negrini@mat.uniroma1.it}}}\ \ and Waldyr M. Oliva \footnote{CAMGSD and ISR, Instituto Superior Técnico, UTL - Lisboa;e-mail:{\texttt{wamoliva@math.ist.utl.pt
}}} }
\date{}
\begin{document}

\maketitle

\begin{abstract}
We consider an incompressible fluid contained in a toroidal stratum which is only subjected to Newtonian self-attraction. Under the assumption of infinitesimal tickness of the stratum we show the existence of stationary motions during which the stratum is approximatly a round torus (with radii $r,\,R$ and $R>>r$) that rotates around its axis and at the same time {\it rolls} on itself. Therefore each particle of the stratum describes an helix-like trajectory around the circumference of radius $R$ that connects the centers of the cross sections of the torus.
\end{abstract}
\vskip0.2cm
\noindent

\section{Introduction}
\label{introduction}
 The problem of the determination of the figures of equilibrium of a self-gravitating rotating mass has received a lot of attention in the classical literature beginning with the work of Newton on the oblateness of the earth \cite{chandra2},\cite{newton9} and \cite{oliva2}. The method of {\it canals} introduced by Newton was exploited by Maclaurin and Jacobi who discovered several families of ellipsoids of equilibrium. The search for stationary motions in ellipsoidal regions, including ellipsoids of equilibrium, of self-gravitating incompressible fluids was completed by Riemann \cite{riemann} who developed ideas of Dirichlet \cite{dirichlet}. Poincar\'e \cite{p} adopted a global point of view and introduced the concept of {\it bifurcation} in the attempt of describing the whole set of figures of equilibrium of rotating self-gravitating masses. More recently several papers have appeared approaching this classical problem with variational techniques \cite{a} \cite{ab} \cite{cf}. The case of a rotating solid torus $\mathcal{T}$ was first considered by Poincar\'e \cite{p}, see also \cite{sofie}, and revisited in \cite{a}. It is natural to expect that the motion studied in these papers belongs to a family of stationary motions where, beside rotating around its axis,  the torus rolls on itself so that the fluid particles describe a helix-like path around the circumference $\mathcal{C}$ of radius $R$ that connects the centers of the cross sections of the torus. We can also conjecture that beside this class of stationary motions of solid self-gravitating torii, there is also the possibility of similar stationary motions of self-gravitating toroidal strata. This last class of motions however will not contain as a special case the case of relative equilibrium since for a toroidal stratum the pressure cannot compensate the self attraction that tries to collapse the torus to the circumference $\mathcal{C}$. Assuming that this set of motions does exist, then rigid helicoidal motions of {\it cylindrical} strata should be in the {\it closure} of the set in the sense that, in the singular limit $R\rightarrow+\infty$, the motion of the toroidal stratum should converge to the motion of a cylindrical stratum.
 In this note we adopt this point of view and focus on the case of an incompressible self-gravitating toroidal stratum $\mathcal{T}_R$ of very small (infinitesimal) thickness and prove the existence of stationary motions of $\mathcal{T}_R$ of the type alluded to above for $R>>1$ (cfr. Theorem \ref{main}).

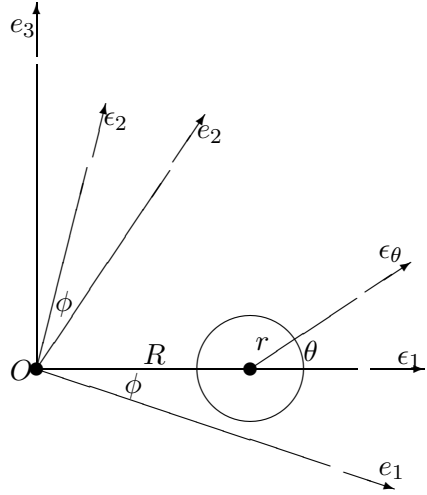
\begin{figure}[t]\label{geometry}

\begin{center}

\setlength{\unitlength}{1pt}
\begin{picture}(200,160) (-160,140)

 {\put(-100,178){\line(1,0){120}}}
 \put(25,178){\vector(1,0){20}}
 \put(35,180){$\epsilon_1$}
 {\put(-20,178){\line(3,2){40}}}
 \put(23,207){\vector(3,2){17}}
 \put(28,220){$\epsilon_\theta$}
 {\put(-100,178){\line(3,-1){110}}}
 \put(15,139){\vector(3,-1){19}}
 \put(28,138){$e_1$}
 {\put(-100,178){\line(0,1){115}}}
 \put(-100,296){\vector(0,1){20}}
 \put(-110,305){$e_3$}
 {\put(-100,178){\line(1,4){19}}}
 \put(-79,258){\vector(1,4){5}}
 \put(-75,270){$\epsilon_2$}
{\put(-100,178){\line(2,3){50}}}
 \put(-49,256){\vector(2,3){12}}
 \put(-40,265){$e_2$}
 \put(-20,178){\circle{40}}
 \put(-20,178){\circle*{5}}
 \put(-60,180){$R$}
 \put(-18,185){$r$}
 \put(0,181){$\theta$}
 \put(-67,169){$\phi$}
 \put(-93,200){$\phi$}
 \put(-100,178){\circle*{5}}
 \put(-110,174){$O$}

\end{picture}
\end{center}
\caption{The geometry of ${\mathcal T}_R$.}
\end{figure}

We represent $\mathcal{T}_R$ in the form, see Figure\ref{geometry}
 \begin{eqnarray}\label{torus}\left\{\begin{array}{lll}
   \hskip.5cm{\mathcal T}_R&=&\{x\in\RR^3:x=R\epsilon_1(\phi)+ (r(\theta)+\lambda)\epsilon_r(\theta,\phi), \\\nonumber
   && (\theta,\phi)\in S^1\times S^1, \lambda\in[-\frac{s(\theta)}{2},\frac{s(\theta)}{2}]\} \\\nonumber
    \hskip.2cm\epsilon_1(\phi)&=& \cos\phi e_1+\sin\phi e_2, \\\nonumber
    \epsilon_r(\theta,\phi)&=& \cos\theta\epsilon_1(\phi)+\sin\theta e_3,
     \end{array}
\right.
 \end{eqnarray}
where $e_j, j=1,2,3$ are the vector of the standard basis of $\RR^3$ , $\theta\rightarrow r(\theta)\simeq r_0$ is the polar representation of the cross section of the {\it middle fiber} ${\mathcal T}_R^0=
\{x\in{\mathcal T}_R: \lambda=0\}$ of ${\mathcal T}_R$
and $s(\theta)\simeq s_0$ is the thickness of the stratum. We let $r_0$ the average of $r(\theta)$ and consider the case where $\varepsilon:=\frac{r_0}{R}<<1$. Under this assumption we regard the stationary motion of the toroidal stratum as a perturbation of the limit helicoidal motion of a cylindrical stratum that we can ideally associate to $\varepsilon=0$.  Since
 cylinder and torus are different topological objects, the problem of continuing the motion of the fluid in the cylindrical stratum into a stationary motion in the toroidal stratum ${\mathcal T}_R$ is a singular perturbation problem.
  From a mathematical point of view the singularity cylinder-torus manifest itself in the fact that $r(\theta)$ and
  $s(\theta)$ and the other functions $\omega(\theta), \Omega^2(\theta)$ that we introduce
  to describe the velocity field on ${\mathcal T}_R$ can not be expanded in powers of $\varepsilon $ but terms of the form $\varepsilon^k\log{\frac{1}{\varepsilon}}$, $k\in \N$, must be included.

  Set \begin{eqnarray}\label{torusequation}
        X(\theta,\phi) &=& R\epsilon_1(\phi)+  r(\theta)\epsilon_r(\theta,\phi).
      \end{eqnarray}
  Under the standing assumption of infinitesimal thickness the velocity field
  $v=v(\theta,\phi,\lambda)$ on  ${\mathcal T}_R$ does not depend on $\lambda$ and can be
  computed on the middle fiber ${\mathcal T}_R^0=\{x=X(\theta,\phi), (\theta,\phi)\in S^1\times S^1\}$.
  Therefore we have
  $v=X_\theta\dot\theta +X_\phi\dot\phi$ where subscripts denote
  partial differentiation and $\dot{}$ time differentiation. Since we look for stationary motions which are
  invariant under rotations around the symmetry axis of ${\mathcal T}_R$ we have
  \begin{eqnarray}\label{omegaeomega}\left\{\begin{array}{lll}
  \dot\theta&=&\omega(\theta),\\
   \dot\phi &=&\Omega(\theta),
  \end{array}
\right.
 \end{eqnarray}
 for some $2\pi-$periodic functions $\Omega, \omega$ and therefore the velocity and acceleration
 vector fields on ${\mathcal T}_R^0$ are
  \begin{eqnarray}\label{v-a}\left\{\begin{array}{lll}
  v&=&X_\theta\omega+X_\phi\Omega,\\
   a&=&v_\theta\omega+v_\phi\Omega.
  \end{array}
\right.
 \end{eqnarray}
 From (\ref{torusequation}) that implies
 \begin{eqnarray}\label{xthetaxphi}
 X_\theta=(r\epsilon_r)_\theta,\hskip.3cm X_\phi=(R+r\cos\theta)\epsilon_2, \hskip.3cm X_\theta\cdot X_\phi=0,
 \end{eqnarray} and (\ref{omegaeomega})(\ref{v-a}) and a routine computation we get
 \begin{eqnarray}\label{velocity}
   v(\theta,\phi) &=&\omega(r\epsilon_r)_\theta+\Omega(R+r\cos\theta)\epsilon_2\\\nonumber
   &=&\omega(r^\prime\epsilon_r+ r\epsilon_\theta)+\Omega(R+r\cos\theta)\epsilon_2,
 \end{eqnarray}
and
 \begin{eqnarray}
\label{acceleration}
a(\theta, \phi) &=& [(r'' - r)\omega^2 + r' \omega' \omega
- (R + r\cos \theta)\cos \theta \Omega^2]\epsilon_r\nonumber\\
 &+ &[((R + r\cos \theta) \Omega)'\omega + (r \cos \theta)'\omega \Omega]\epsilon_2\nonumber\\
 &+ &[2 r'\omega^2 + r \omega' \omega + (R + r\cos \theta)\sin\theta \Omega^2]\epsilon_\theta,
\end{eqnarray}
where $\epsilon_r$ is defined in (\ref{torus}) and $\epsilon_\theta=-\sin\theta\epsilon_1+\cos\theta e_3, \epsilon_2=-\sin\phi e_1+\cos\phi e_2.$
Let $n=n(\theta,\phi)=\frac{X_\phi\wedge X_\theta}{\vert X_\phi\wedge X_\theta\vert}$ the exterior unit   normal
to ${\mathcal T}_R^0$ at $X(\theta,\phi)$ and $\delta=\delta(\theta)$ the thickness of the stratum along $n$
that is $\delta=s n\cdot\epsilon_r.$ The balance between inertial forces and newtonian self-attraction at the typical point $X(\phi,\theta)$ of ${\mathcal T}_R^0$ reads
\begin{eqnarray}\label{equilibrium}
  a(\phi,\theta) &=& G\mu\int_{S^1\times S^1}^*\frac{X(\alpha,\beta)-X(\theta,\phi)}
  {\vert X(\alpha,\beta)-X(\theta,\phi)\vert^3}\vert X_\phi\wedge X_\theta \vert(\alpha,\beta)\delta(\alpha)d\alpha d\beta,\hskip.4cm
\end{eqnarray}
where the integral in the r.h.s. is to be intended in the sense of Cauchy {\it principal value}. That is
\begin{eqnarray*}
  \int_{S^1\times S^1}^* &=& \lim_{l\rightarrow 0^+}\int_{(S^1\times S^1)\setminus B_l }
\end{eqnarray*}
with $B_l$ a ball or radius $l$ centered at $X(\phi,\theta).$

Explicit expressions of the components of the newtonian force and of their dependence on $R$
will be presented in Section \ref{newtonianforces}. Here we only observe that, due to the axial symmetry of the mass distribution
in ${\mathcal T}_R$ the component on $\epsilon_2$ of the r.h.s. of (\ref{equilibrium}) vanishes
and therefore (\ref{acceleration}) and (\ref{equilibrium}) imply the first integral
\begin{eqnarray}\label{firstintegral}
  (R+r\cos\theta)^2\Omega &=& J(R),
\end{eqnarray}
that expresses the conservation of momentum of momentum with respect to the symmetry axis of ${\mathcal T}_R$.
Equation (\ref{equilibrium}) must be complemented with the continuity equation that
expresses the constance of the flux through the section $S_\theta$ of ${\mathcal T}_R$ obtained
by cutting ${\mathcal T}_R$ at right angle with respect to  ${\mathcal T}_R^0$ along the line $\theta=const:$
\begin{eqnarray}\label{continuity}
  \vert S_\theta\vert v\cdot \epsilon_t  &=& C(R),
\end{eqnarray}
where $\vert S_\theta\vert$ is the measure of $S_\theta$ and $\epsilon_t=n\wedge\epsilon_2$ is a unit vector tangent at $X(\theta,\phi)$ to the line $\phi=const$
 on ${\mathcal T}_R^0$.
Observing that
  \begin{eqnarray}\label{expressions}\left\{\begin{array}{lll}
  \hskip.2cm\epsilon_t&=&{\displaystyle \frac{(r\epsilon_r)_\theta}{\vert(r\epsilon_r)_\theta\vert}=
   \frac{r^\prime\epsilon_r+r\epsilon_\theta}{\sqrt{{r^\prime}^2+r^2}}},\\ \\
     \hskip.2cm n&=&{\displaystyle  \frac{r\epsilon_r-r^\prime\epsilon_\theta}{\sqrt{{r^\prime}^2+r^2}}},\\ \\
     \hskip.2cm\delta&=&{\displaystyle \frac{sr}{\sqrt{{r^\prime}^2+r^2}}},\\ \\
     \vert S_\theta\vert&=&2\pi(R+r\cos\theta)\delta,
   \end{array}
\right.
 \end{eqnarray}
 where $^\prime$ denotes differentiation with respect to $\theta,$
 we can rewrite (\ref{continuity}) in the explicit form
 \begin{eqnarray}\label{continuity1}
   (R+r\cos\theta)sr\omega &=& C(R).
 \end{eqnarray}
By means of the first integral (\ref{firstintegral}) and the continuity equation (\ref{continuity1}) we can determine $s$ and $\Omega$ once $r$ and $\omega$ are known. This allows for transforming system (\ref{equilibrium}),
 (\ref{continuity1}) into an equivalent system, see (\ref{system}) below, for the unknowns $r$ and $\omega$. We let $r_0$ and $\omega_0$  be the averages of $r$ and $\omega$ and we represent the unknowns $r$ and $\omega$ in the form
 \begin{eqnarray}\label{ro-w-form}
 r=r_0(1+\varepsilon\rho),\\\nonumber
\omega= \omega_0(1+\varepsilon w),
 \end{eqnarray}
 where $\varepsilon:=\frac{r_0}{R}$ is regarded as a small parameter and $\rho$ and $w$ are $2\pi-$periodic functions with zero average.
 We observe that the representation of ${\mathcal T}_R$ in (\ref{torus}) is not unique. Indeed
a slight change of $R$ can be exactly compensated by a corresponding change of the function
$\theta\rightarrow r(\theta)$. To make the representation
 (\ref{torus}) of ${\mathcal T}_R$  unique we impose on the unknown $\rho$ the conditions

 \begin{eqnarray}\label{fixcenter}\left\{\begin{array}{lll}
 \int_{S^1}\rho\cos\theta=0,\\
 \int_{S^1}\rho\sin\theta=0.
   \end{array}
\right.
 \end{eqnarray}

Our main result is the following 

\begin{theo}\label{main} Given $r_0>0$ and $\omega_0>0$,
there exists $\varepsilon_0>0$  such that for each $\varepsilon=\frac{r_0}{R}<\varepsilon_0$ system (\ref{system}),
  has a $2\pi-$periodic solution $r=r_0(1+\varepsilon\rho),\,\omega=\omega_0(1+\varepsilon w)$ such that:
  \begin{description}
  \item[(i)] The maps $\rho$ and $w$ are of class $C^{2,\gamma}$ and $C^{1,\gamma}$ respectively for some $\gamma\in(0,1)$. Moreover $\rho$ and $w$ have zero average and $\rho$ satisfies (\ref{fixcenter}).
  \item[(ii)] $\rho$ and $w$ satisfy the estimates
  \begin{eqnarray*}
  \lim_{\varepsilon\rightarrow 0^+}\left\|\rho\right\|_{W^{2,2}}=0,\\
  \lim_{\varepsilon\rightarrow 0^+}\left\|w-\bar{w}\right\|_{W^{1,2}}=0,
  \end{eqnarray*}
  where $\bar{w}:=-\frac{1}{4}\cos\theta$.
  \item[(iii)] The solution is unique in the set of maps that satisfy:
  \begin{eqnarray*}
  \left\|\rho\right\|_{W^{2,2}}+\left\|w\right\|_{W^{1,2}}\leq 2\left\|\bar{w}\right\|_{W^{1,2}}.
  \end{eqnarray*}
  \item[(iv)] The function $s$ is of class $C^{1,\gamma}$ and the function $\Omega$ is of class
  $C^{2,\gamma}$. Moreover
  \begin{eqnarray*}
    \left\|s-\bar{s}\right\|_{W^{1,2}}=o(\varepsilon),\\
  \left\|\Omega-\overline{\Omega}\right\|_{W^{2,2}}=o(\varepsilon),
  \end{eqnarray*}
  where $\bar{s}=\frac{\omega_0^2 r_0}{2\pi\mu G}(1-\varepsilon\frac{3}{4}\cos\theta)$ and
  $\overline{\Omega}=\frac{\omega_0}{2\sqrt{\pi}}\varepsilon\nu(\varepsilon)$ with
  \begin{eqnarray*} \nu(\varepsilon) = O(\sqrt{\log{\frac{1}{\varepsilon}}}).
  \end{eqnarray*}
  \end{description}
 \end{theo}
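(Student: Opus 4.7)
The plan is to turn system (\ref{equilibrium})--(\ref{continuity1}) into a fixed-point problem for the perturbations $(\rho,w)$ introduced in (\ref{ro-w-form}) on a small ball of $W^{2,2}(S^1)\times W^{1,2}(S^1)$. First, the first integral (\ref{firstintegral}) and the continuity equation (\ref{continuity1}) are used to eliminate $\Omega$ and $s$ in favour of $r$ and $\omega$; the $\epsilon_2$-component of (\ref{equilibrium}) vanishes identically by axial symmetry, so what survives are the $\epsilon_r$- and $\epsilon_\theta$-components, yielding the two scalar equations on $S^1$ referred to as (\ref{system}). The four scalar conditions $\int\rho=\int w=0$ together with (\ref{fixcenter}) pin down the constants $J(R)$ and $C(R)$ and the $(\cos\theta,\sin\theta)$-ambiguity in the representation (\ref{torus}) of ${\mathcal T}_R$.

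The analytical heart of the argument is the singular expansion of the Newtonian integral in (\ref{equilibrium}) as $\varepsilon\to 0^+$. I would split the domain $S^1\times S^1$ into a near-diagonal region $\{|\alpha-\theta|\leq\varepsilon^{1/2}\}$ and its complement. On the far piece a multipole-type expansion produces a smooth $\theta$-dependent force whose leading part is the uniform attraction of a cylindrical stratum and whose correction carries the first $\log\frac{1}{\varepsilon}$; on the near piece, after rescaling by $\varepsilon$, one recognises the planar logarithmic kernel of the circular cross-section. Collecting the two contributions gives, modulo errors that are $o(1)$ in the operator norm from $W^{2,2}\times W^{1,2}$ into $L^2\times L^2$, a linear leading operator whose $\epsilon_\theta$-component acts on $w$ as a first-order differential operator with inhomogeneity projecting onto $\sin\theta$ -- whence $\bar w=-\tfrac14\cos\theta$ -- and whose $\epsilon_r$-component acts on $\rho$ as a Fredholm operator on $S^1$ with kernel exactly $\langle 1,\cos\theta,\sin\theta\rangle$, i.e.\ the directions killed by $\int\rho=0$ and (\ref{fixcenter}).

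Having inverted this linearisation on the corresponding quotient, one rewrites (\ref{system}) in the form $(\rho,w)=\Phi_\varepsilon(\rho,w)$; the nonlinear remainder and the higher-order corrections of the Newtonian force are controlled in the right norms via the one-dimensional Sobolev embeddings and standard continuity properties of singular integrals on $S^1$. For $\varepsilon$ small enough, $\Phi_\varepsilon$ is a contraction on the ball of radius $2\|\bar w\|_{W^{1,2}}$, and Banach's theorem gives (i)--(iii). Bootstrapping on the equation delivers the H\"older regularity announced in (i) and (iv), and substituting the solution back into (\ref{firstintegral}) and (\ref{continuity1}) yields the estimates for $s$ and $\Omega$ in (iv); the $\sqrt{\log\frac{1}{\varepsilon}}$ factor in $\overline\Omega$ originates from the leading balance between the inward self-attraction of the collar and the centripetal term $(R+r\cos\theta)\Omega^2$ in (\ref{acceleration}). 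The main technical obstacle will be keeping track of the $\varepsilon^k\log\frac{1}{\varepsilon}$ contributions so that the Fredholm inversion and the contraction estimates close -- in particular, verifying that the linearised operator, modulo its three-dimensional kernel, admits a bounded inverse from the relevant subspace of $L^2(S^1)\times L^2(S^1)$ into the subspace of $W^{2,2}(S^1)\times W^{1,2}(S^1)$ on which $\Phi_\varepsilon$ operates.
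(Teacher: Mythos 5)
Your overall architecture is the one the paper follows (eliminate $s,\Omega$ via (\ref{continuity1}) and (\ref{firstintegral}), expand the Newtonian force in $\varepsilon$, invert a linearization on a constrained subspace of $W^{2,2}\times W^{1,2}$, run a contraction, bootstrap regularity, then read off $s,\Omega$), but two of your key steps would not go through as described. First, the asymptotic decomposition of the Newtonian integral is made in the wrong variable. After factoring, the kernels degenerate through $D_0^2=z(\beta)+\varepsilon^2 z(\alpha-\theta)$, so the $\varepsilon$-singularity lives in the ring variable $\beta$, not in the cross-sectional difference $\alpha-\theta$: a multipole/far-field expansion is legitimate only for $|\beta|\gg\varepsilon$, and your ``far'' region $\{|\alpha-\theta|>\varepsilon^{1/2}\}$ still contains the set $\beta\approx 0$ where the integrand is of size $z(\beta)^{-3/2}$ and no such expansion is uniform. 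The cylindrical (planar logarithmic) cross-sectional operator and the $\log\frac1\varepsilon$ arise from rescaling $\beta=\varepsilon\eta$ (equivalently $z(\beta)=\varepsilon^2\xi^2$), which is what the paper does in Lemmas \ref{mapdefinedbyk}--\ref{fourierk2}; moreover one needs H\"older bounds on these $\varepsilon$-dependent operators that are uniform in $\varepsilon$ (and the observation that $\mathcal{K}_3\sim\log\frac1\varepsilon$ cancels between $f^r$ and the centrifugal term, so no logarithm survives in the equations for $(\rho,w)$, only in $\Omega$). None of this is recoverable from a split at $|\alpha-\theta|\le\varepsilon^{1/2}$.

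Second, your description of the linearized problem misstates where the degeneracy sits, and this is exactly the point where the scheme would fail to close. The leading linear operator does not decouple: mode by mode it is the $2\times2$ system (\ref{wn-rhon-1})--(\ref{wn-rhon-2}) coupling $(\rho_n,w_n)$, with determinant $2-3n+n^3=(n-1)^2(n+2)$, and the radial part acting on $\rho$ alone is multiplication by $-(n^2-n+2)$, which never vanishes — so there is no Fredholm kernel $\langle 1,\cos\theta,\sin\theta\rangle$ to be ``killed'' by $\int\rho=0$ and (\ref{fixcenter}); those are gauge conditions on the unknowns, while the real obstruction is in the range at the low modes. After gauge-fixing, the system is apparently overdetermined at $n=0,1$: the constant mode of the radial equation must be used to fix the flux constant $C(R)$ (the paper's $c(\varepsilon)$ in (\ref{c-epsilon})), the $n=1$ radial mode fixes $\bar w=-\frac14\cos\theta$, and the $n=0,1$ modes of the tangential equation must be shown to be automatically satisfied. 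The paper does this by passing to (\ref{newsystem}) and invoking the conservation identities (\ref{identities}), (\ref{newidentity}) together with Lemma \ref{can-be-done} (zero power of the conservative forces, vanishing axial force, the kinematic identities). Your proposal never addresses these compatibility conditions, and without them the map $\Phi_\varepsilon$ cannot be defined from the constrained space into itself; inverting ``on the quotient'' requires identifying the cokernel and checking the right-hand side is orthogonal to it, which is precisely the missing ingredient.
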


 Theorem\ref{main}, in the limit case of infinitesimal thickness,
 yields an example of a stationary motion of a self-gravitating fluid which can not
 be reduced to relative equilibrium. It is natural
 to expect that many other mass distribution of a self-gravitating fluid can allow for a similar situation.
 For instance several toroidal strata like ${\mathcal T}_R$ one inside the other with suitable choice of the
 functions $r,s,\omega,\Omega$ for each stratum should do.

  The paper is organized as follows. In sec.2 we reduce system (\ref{equilibrium}),
 (\ref{continuity1}) to a system, see (\ref{system}), of two scalar equation for the unknowns $r,\omega$
 after eliminating $s$ and $\Omega$ via the continuity equation (\ref{continuity1}) and the first integral (\ref{firstintegral}). In Sec.3 we analyze in detail the newtonian forces and study their dependence on the parameter $R.$
 Using the analysis in sec.2 we can write the system for $r,\omega$ as a weak nonlinear equation which,
 for $R>>1$, can be solved leading to the proof of Theorem \ref{main}.

 We denote by $\phi_0,\,\phi_n^j,\,j=1,2,\,n=1,\cdots$ the Fourier coefficient of a $2\pi-$periodic integrable function $\phi$.

\section{The system for $r$ and $\omega$}\label{system-ro-w}
In the following, if $h:\RR\rightarrow\RR$ is a $2\pi$-periodic function, we set
\begin{eqnarray}\label{average}
  \langle h\rangle &=& \int_{S^1}h(\theta)d\theta.
\end{eqnarray}
Let $F$ the integral term on the r.h.s. of (\ref{equilibrium}), that is the force of newtonian interaction. Define
\begin{eqnarray}\label{components}
f^r=F\cdot\epsilon_r,\hskip.3cm f^\theta=F\cdot\epsilon_\theta,\hskip.3cm f=F\cdot\epsilon_1 =f^r\cos\theta-f^\theta\sin\theta.
\end{eqnarray}
The axial symmetry of the problem implies that, as can be also verified by inspecting the expression of $F$,  the components $f^r,\hskip.1cm f^\theta,\hskip.1cm f$ depend only on
the variable $\theta.$ Rewrite
(\ref{equilibrium}) in the form
\begin{eqnarray}\label{equilibrium1}
  v_\theta\omega+v_\phi\Omega&=& F.
\end{eqnarray}
From the kinematic identity
\begin{eqnarray}\label{kinematic}
  v_\theta\cdot\epsilon_1 &=& (v\cdot\epsilon_1)_\theta\hskip.3cm\Rightarrow
  \langle(v\cdot\epsilon_1)_\theta\rangle=0,
\end{eqnarray}
and (\ref{equilibrium1}) it follows
\begin{eqnarray}\label{oomega}
  \langle v_\phi\cdot\epsilon_1\frac{\Omega}{\omega}\rangle=-\langle (R+r\cos(\cdot))\frac{\Omega^2}{\omega}\rangle
  &=& \langle \frac{f}{\omega}\rangle,
\end{eqnarray}
where we have also used (\ref{velocity}) that implies $v_\phi\cdot\epsilon_1=-(R+r\cos\theta)\Omega.$
From (\ref{firstintegral}) and (\ref{oomega}) we obtain
 \begin{eqnarray}\label{jeieomega}\left\{\begin{array}{lll}
 J^2(R)&=&-{\displaystyle\frac{\langle \frac{f}{\omega}\rangle}{\langle\frac{1}{\omega(R+r\cos(\cdot))^3}\rangle}}, \\ \\
   \Omega^2(\theta)&=&-{\displaystyle \frac{1}{(R+r\cos\theta)^4}\frac{\langle \frac{f}{\omega}\rangle}{\langle\frac{1}{\omega(R+r\cos(\cdot))^3}\rangle}}.
   \end{array}
\right.
 \end{eqnarray}
 By means of this expressions of $\Omega^2$ and (\ref{components}), (\ref{acceleration}), from (\ref{equilibrium}),
   after dividing by $\omega$, it follows
   \begin{eqnarray}\label{system}\left\{\begin{array}{lll}
 (r'' - r)\omega  + r' \omega'   &=&{\displaystyle -\frac{\cos(\cdot)}{\omega(R+r\cos(\cdot))^3}\frac{\langle \frac{f}{\omega}\rangle}{\langle\frac{1}{\omega(R+r\cos(\cdot))^3}\rangle}+\frac{f^r}{\omega}}\\ \\
 2r^\prime\omega +r\omega^\prime &=&{\displaystyle\frac{\sin(\cdot)}{\omega(R+r\cos(\cdot))^3}\frac{\langle \frac{f}{\omega}\rangle}{\langle\frac{1}{\omega(R+r\cos(\cdot))^3}\rangle}+\frac{f^\theta}{\omega}}.
   \end{array}
\right.
 \end{eqnarray}

 From (\ref{xthetaxphi}),
 (\ref{expressions}) we obtain
\begin{eqnarray}\label{elementoarea}
  \delta\vert X_\theta\wedge X_\phi\vert &=&\frac{r s}{\sqrt{{r\prime}^2+r^2}}\sqrt{{r\prime}^2+r^2}(R+r\cos(\cdot))\\\nonumber
  &=&r s(R+r\cos(\cdot))=\hskip.1cm C(R)\frac{1}{\omega}.
\end{eqnarray}
We note that from (\ref{elementoarea}), we have
\begin{eqnarray}\label{integralontorus}
  \int_{S^1\times S^1}h \delta\vert X_\theta\wedge X_\phi\vert d\theta d\phi
  &=& C(R)\int_{S^1}\langle\frac{h}{\omega}\rangle d\phi,
\end{eqnarray}
for each function $h:S^1\times S^1\rightarrow\RR$.
 As observed  $f^r, f^\theta, f$  depend only on the variable $\theta$, and therefore
 it suffices to compute these components at $\phi=0$. From this observation and (\ref{elementoarea})
it follows \begin{eqnarray}\label{forcecomponents}\left\{\begin{array}{lll}
  f^r &=& {\displaystyle \frac{G\mu C(R)}{R^2}\int_{S^1\times S^1} \frac{1}{\omega(\alpha)}
  \frac{N^r(\alpha,\beta,\theta)}{D^3(\alpha,\beta,\theta)}d\alpha d\beta},\\ \\
  f^\theta &=& {\displaystyle  \frac{G\mu C(R)}{R^2}\int_{S^1\times S^1}\frac{1}{\omega(\alpha)}
  \frac{N^\theta(\alpha,\beta,\theta)}{D^3(\alpha,\beta,\theta)}d\alpha d\beta},
   \end{array}
\right.
\end{eqnarray}

where
\begin{eqnarray}\label{numerators}\left\{\begin{array}{lll}
 R N^r &=&(X(\alpha,\beta)-X(\theta,0))\cdot\epsilon_r(\theta,0)\\
  &=&-(1-\cos\beta)(R+r(\alpha)\cos\alpha)\cos\theta+r(\alpha)\cos(\alpha-\theta)-r(\theta), \\
 R N^\theta &=&(X(\alpha,\beta)-X(\theta,0))\cdot\epsilon_\theta(\theta,0)\\
  &=&(1-\cos\beta)(R+r(\alpha)\cos\alpha)\sin\theta+r(\alpha)\sin(\alpha-\theta),
  \end{array}
\right.
 \end{eqnarray}
 and
 \begin{eqnarray}\label{denominator}
  R^2 D^2 &=& \vert X(\alpha,\beta)-X(\theta,0)\vert^2\\\nonumber
   &=&((R+r(\alpha)\cos\alpha)\cos\beta-(R+r(\theta)\cos\theta))^2\hskip1.8cm\\\nonumber
   &&+(R+r(\alpha)\cos\alpha)^2\sin^2\beta+(r(\alpha)\sin\alpha-r(\theta)\sin\theta)^2.\hskip1.8cm\\\nonumber
   &=&2(1-\cos\beta)(R^2+R(r(\alpha)\cos\alpha+r(\theta)\cos\theta)+r(\alpha)r(\theta)\cos\alpha\cos\theta)\\\nonumber
   &&+2(1-\cos(\alpha-\theta))r(\alpha)r(\theta)+(r(\alpha)-r(\theta))^2.
 \end{eqnarray}
 In the following we denote by $\frac{a^r}{\omega},\;\frac{a^\theta}{\omega}$ and by ${\mathcal F}^r, \hskip.1cm{\mathcal F}^\theta$ the expressions on l.h.s. and on
 the r.h.s of (\ref{system}) respectively.
 For later reference we list the identities
  \begin{eqnarray}\label{identities}
 \langle\frac{a^r}{\omega}\cos(\cdot)-
 \frac{a^\theta}{\omega}\sin(\cdot)\rangle=0=\langle{\mathcal F}^r\cos(\cdot)-{\mathcal F}^\theta\sin(\cdot)\rangle\hskip.4cm\\
 \langle\frac{a^r}{\omega}\sin(\cdot)+
 \frac{a^\theta}{\omega}\cos(\cdot)\rangle=0=\langle{\mathcal F}^r\sin(\cdot)+{\mathcal F}^\theta\cos(\cdot)\rangle.\hskip.4cm
 \end{eqnarray}
 The first two are just a rewriting of (\ref{kinematic}) and (\ref{oomega}), the third is equivalent
 to $\langle((r\sin(\cdot))^\prime\omega)^\prime\rangle=0$ and says that the component of the
 acceleration of the center of mass of ${\mathcal T}_R$ on the axis of ${\mathcal T}_R$ is zero. The fourth
 identity is equivalent to $\langle\frac{F\cdot e_3}{\omega}\rangle=0$ which is a consequence
 of the fact that there are no exterior forces acting on ${\mathcal T}_R$. In the following we will also use the identities
 \begin{eqnarray}\label{newidentity}
 \langle\omega (r^\prime{\mathcal F}^r+r {\mathcal F}^\theta)\rangle=0,\\\nonumber
 \langle r^\prime a^r+r a^\theta \rangle=0.
 \end{eqnarray}
The identity (\ref{newidentity})$_1$ says that the power $p$ of the conservative field of force $\Phi$ sum of the newtonian and centrifugal forces on the stationary motion of the torus is zero. To derive (\ref{newidentity})$_1$ we note that (\ref{integralontorus}) implies:
\begin{eqnarray}
0=p=\int_{S^1\times S^1}\Phi\cdot v\delta\vert X_\theta\wedge X_\phi| d\theta d\phi=C(R)\int_{S^1}\langle\frac{\Phi\cdot v}{\omega}\rangle d\phi\\\nonumber
=2\pi C(R)\langle\omega (r^\prime{\mathcal F}^r+r {\mathcal F}^\theta)\rangle,\hskip2cm
\end{eqnarray}
where we have also used the definition of ${\mathcal F}^r,\;{\mathcal F}^\theta$ and (\ref{velocity}) that implies $\frac{\Phi\cdot v}{\omega}=\omega(r^\prime{\mathcal F}^r+r {\mathcal F}^\theta)$.
The kinematic identity (\ref{newidentity})$_2$ follows from
\begin{eqnarray}
\langle r^\prime a^r+r a^\theta \rangle &=&\langle\omega[r^\prime((r^{\prime\prime}-r)\omega+(r^2\omega)^\prime]\rangle\\\nonumber
&=&\langle\omega[({r^\prime}^2\omega)^\prime-\frac{\omega}{2}(r^2+{r^\prime}^2)^\prime+(r^2\omega)^\prime]\rangle\\\nonumber
&=&\langle-({r^\prime}^2\omega)\omega^\prime+\frac{(\omega^2)^\prime}{2}(r^2+{r^\prime}^2)-(r^2\omega)\omega^\prime\rangle\\\nonumber &=&0.
\end{eqnarray}
 We can replace (\ref{system}) with the equivalent system
\begin{eqnarray}\label{newsystem}
\left\{\begin{array}{l}\frac{a^r}{\omega}={\mathcal F}^r,\\
\frac{a^\theta}{\omega}={\mathcal F}^\theta+\frac{r^\prime}{r}({\mathcal F}^r-\frac{a^r}{\omega})  .
\end{array}\right.
\end{eqnarray}
The advantage of (\ref{newsystem}) with respect to the original system (\ref{system}) is that
 (\ref{newidentity}) imply that, in the analysis of (\ref{newsystem}), we don't need to consider the projection of (\ref{newsystem})$_2$ on the subspace of constants functions.
 This is a consequence of the following lemma for $f= \omega (r^\prime{\mathcal F}^r+r {\mathcal F}^\theta)-(r^\prime a^r+r a^\theta)$ and $h=\frac{1}{r\omega}$.
 \begin{lem}\label{can-be-done}
 Let $f,\,h$ be $2\pi-$periodic $L^2(-\pi,\pi)$ functions. Assume that $c\leq h\leq C$ for some constants $c,\,C>0$. Let $g=h f,\,k=\frac{1}{h}$. Then a necessary a sufficient condition in order that $g=0$ is that $f_0=0$ and $g_n^j=0,\,n=1,\cdots;\,j=1,2$.
 \end{lem}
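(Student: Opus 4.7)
The plan is to observe that the condition $g_n^j = 0$ for every $n \geq 1$ and $j = 1,2$ forces $g$ to be a \emph{constant} function (its only possibly nonzero Fourier coefficient is the zero-th one, $g_0$), and then to exploit the positivity of $h$ (via $k = 1/h$) together with the hypothesis $f_0 = 0$ to conclude that this constant must in fact vanish.

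More precisely, I would proceed as follows. Necessity is immediate: if $g \equiv 0$ then, since $h \geq c > 0$, one has $f \equiv g/h \equiv 0$, so every Fourier coefficient of $f$ vanishes, in particular $f_0 = 0$, and trivially $g_n^j = 0$ for all $n,j$. For sufficiency, suppose that $g_n^j = 0$ for every $n \geq 1$ and $j = 1,2$. Since $g \in L^2(-\pi,\pi)$, its Fourier series reduces to the constant term, so $g(\theta) = g_0$ almost everywhere. Using $h \geq c > 0$, we may divide by $h$ to obtain
\begin{equation*}
f(\theta) = \frac{g(\theta)}{h(\theta)} = g_0 \, k(\theta) \quad \text{a.e.}
\end{equation*}
Averaging over $S^1$ and using the hypothesis $f_0 = 0$ yields
\begin{equation*}
0 = f_0 = g_0 \, \langle k \rangle.
\end{equation*}
Now $k = 1/h$ satisfies $k \geq 1/C > 0$, so $\langle k \rangle > 0$, and therefore $g_0 = 0$. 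Hence $g \equiv 0$.

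There is essentially no real obstacle here; the statement is a short Fourier-analytic observation dressed up so as to fit the calling context (the application mentioned right before the lemma, with $f = \omega(r'\mathcal{F}^r + r\mathcal{F}^\theta) - (r' a^r + r a^\theta)$ and $h = 1/(r\omega)$, where the identities (\ref{newidentity}) guarantee the average condition). The only point that deserves any care is verifying that the pointwise bound $c \leq h \leq C$ is genuinely used: it enters both in ensuring that the pointwise division $f = g/h$ is legitimate and, more importantly, in producing the strict positivity $\langle k \rangle > 0$ which is what kills $g_0$.
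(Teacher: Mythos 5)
Your proof is correct and follows essentially the same route as the paper: the paper writes the Parseval-type identity $f_0 = k_0 g_0 + \sum_{n\geq 1}(k_n^1 g_n^1 + k_n^2 g_n^2)$ for $f = k\,g$ and concludes from $k_0>0$, while you first observe that the hypothesis forces $g$ to be constant and then average $f = g_0\,k$, which is the same identity in this special case. The only (harmless) cosmetic discrepancy is the Fourier normalization: with the paper's convention the constant term is $\tfrac12 g_0$ rather than $g_0$, which does not affect the argument.
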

\begin{proof}
A standard computation reveals that
\begin{eqnarray}
f_0=k_0 g_0+\sum_{n\geq 1}(k_n^i g_n^1+k_n^2 g_n^2).
\end{eqnarray}
The lemma follows from this and from the assumptions on $h$ that imply $k_0>0$.
\end{proof}
 \section{The Newtonian Forces}
\label{newtonianforces}

In this section we assume that the unknowns $\rho$ and $w$ in (\ref{ro-w-form}) are $2\pi-$periodic functions such that
 \begin{eqnarray}\label{smoothness}
   \rho\in C^{1,\gamma}, & & w\in C^{0,\gamma},\;\,\gamma\in(0,1)\\\nonumber
   \vert|\rho\vert|_{C^{1,\gamma}}\leq C, & & \vert|w\vert|_ {C^{0,\gamma}}\leq C.
 \end{eqnarray}
 We analyze the smoothness and the dependence on $\varepsilon=\frac{r_0}{R}$ of the forcing terms ${\mathcal F}^r,\hskip.1cm{\mathcal F}^\theta$ on the r.h.s. of (\ref{system}). This analysis involves the study of certain integral operators with singular kernels which, even though are of the type considered in the literature \cite{miranda}, are extended to a manifold which depends on the singular parameter $\varepsilon$. For this reason,
  in order to estimate the dependence on $\varepsilon<<1$ of the norms of these operators
 we develop a direct analysis.

 By using
 (\ref{ro-w-form}), after setting
 \begin{eqnarray}\label{zeta}
 z(t) &=& 2(1-\cos t),
 \end{eqnarray}
 we rewrite (\ref{numerators}) as
 \begin{eqnarray}\label{numerators1}\left\{\begin{array}{lll}
   N^r &=& \frac{1}{2}(-z(\beta)\cos\theta +\varepsilon(-z(\beta)\cos\alpha\cos\theta-z(\alpha-\theta))\\
   & &+\varepsilon^2(-z(\beta)\rho(\alpha)\cos\alpha\cos\theta
   -z(\alpha-\theta)\rho(\alpha)+2(\rho(\alpha)-\rho(\theta)))),\\
   N^\theta &=& \frac{1}{2}(z(\beta)\sin\theta+\varepsilon(z(\beta)\cos\alpha\sin\theta
   +2\sin(\alpha-\theta))\\
   &&+\varepsilon^2(z(\beta)\rho(\alpha)\cos\alpha\sin\theta+2\rho(\alpha)\sin(\alpha-\theta))).
   \end{array}
\right.
 \end{eqnarray}
 A similar computation leads to
 \begin{eqnarray}\label{denominator1}
   D^2 &=& z(\beta)[1+\varepsilon(\cos\alpha+\cos\theta)\\\nonumber
   &&+\varepsilon^2(\cos\alpha\cos\theta+\rho(\alpha)\cos\alpha+\rho(\theta)\cos\theta)\\\nonumber
   &&+\varepsilon^3(\rho(\alpha)+\rho(\theta))\cos\alpha\cos\theta
   +\varepsilon^4\rho(\alpha)\rho(\theta)\cos\alpha\cos\theta]\\\nonumber
   &&+\varepsilon^2z(\alpha-\theta)[1+\varepsilon(\rho(\alpha)+\rho(\theta))+\varepsilon^2(\rho(\alpha)\rho(\theta)\\\nonumber
   &&+\frac{(\rho(\alpha)-\rho(\theta))^2}{z(\alpha-\theta)})] .
 \end{eqnarray}

 \begin{lem}\label{integrands}
  The components $f^r$ and $f^\theta$ given by (\ref{forcecomponents}) can be expressed in the form:
  \begin{eqnarray}\label{componentsexpansion}\left\{\begin{array}{lll}
 f^r=\frac{G\mu C(R)}{2\omega_0 R^2}\sum_{h=1}^3\int_{S^1\times S^1}
 K_h^r(\beta,\alpha-\theta,\varepsilon)(1+S_h^r(\rho,w;\alpha,\beta,\theta,\varepsilon))d\alpha d\beta,\\
 f^\theta=\frac{G\mu C(R)}{2\omega_0 R^2}\sum_{h=1}^2\int_{S^1\times S^1}
 K_h^\theta(\beta,\alpha-\theta,\varepsilon)(1+S_h^\theta(\rho,w;\alpha,\beta,\theta,\varepsilon))d\alpha  d\beta,
   \end{array}
\right.
 \end{eqnarray}
 where 
 \begin{eqnarray}\label{kernels}
   K_1^r &=& \frac{2\varepsilon^2(\rho(\alpha)-\rho(\theta))}{(z(\beta)+\varepsilon^2z(\alpha-\theta))^\frac{3}{2}}, \\\nonumber
   K_2^r=\frac{1}{\varepsilon}K_2 &=& -\frac{\varepsilon z(\alpha-\theta)}{(z(\beta)+\varepsilon^2z(\alpha-\theta))^\frac{3}{2}}, \\\nonumber
   K_3^r=-\cos(\theta)K_3 &=& -\frac{z(\beta)\cos\theta}{(z(\beta)+\varepsilon^2z(\alpha-\theta))^\frac{3}{2}}, \\\nonumber
   K_1^\theta=\frac{1}{\varepsilon}K_1 &=& \frac{2\varepsilon \sin(\alpha-\theta)}{(z(\beta)+\varepsilon^2z(\alpha-\theta))^\frac{3}{2}},  \\\nonumber
   K_2^\theta=\sin(\theta)K_3 &=& \frac{z(\beta)\sin\theta}{(z(\beta)+\varepsilon^2z(\alpha-\theta))^\frac{3}{2}},
 \end{eqnarray}

 and, under the assumption (\ref{smoothness}), $S_h^r(\rho,w)$ and $S_h^\theta(\rho,w)$  are $C^{0,\gamma}$ functions such that

 \begin{eqnarray}\label{ssmoothness}
   \vert| S_h^r(\rho,w)\vert|_{C^{0,\gamma}},\vert| S_h^\theta(\rho,w)\vert|_{C^{0,\gamma}}&\leq& C\varepsilon ,\\\nonumber
   \vert| S_h^r(\rho_1,w_1)-S_h^r(\rho_2,w_2)\vert|_{C^{0,\gamma}}&\leq&
   C\varepsilon\vert|(\rho_1,w_1)-(\rho_2,w_2)\vert|,\\\nonumber
   \vert| S_h^\theta(\rho_1,w_1)-S_h^\theta(\rho_2,w_2)\vert|_{C^{0,\gamma}}&\leq&
   C\varepsilon\vert|(\rho_1,w_1)-(\rho_2,w_2)\vert|,
 \end{eqnarray}
 where $\vert|(\rho,w)\vert|:=\vert|\rho\vert|_{C^{1,\gamma}}+\vert|w\vert|_{C^{0,\gamma}}.$

 \end{lem}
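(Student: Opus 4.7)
My plan is to read off the $K_h^r,K_h^\theta$ and the remainders $S_h^r,S_h^\theta$ by regarding the integrands in (\ref{forcecomponents}) as a leading singular kernel multiplied by a bounded perturbation, where ``leading'' means the expressions obtained from (\ref{denominator1}) by stripping all summands beyond $z(\beta)+\varepsilon^2 z(\alpha-\theta)$, and from (\ref{numerators1}) by stripping all summands beyond the three (resp.\ two) lowest-order ones in $\varepsilon$.

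The first step is to factor $D^2=D_0^2(1+\varepsilon E)$ with $D_0^2:=z(\beta)+\varepsilon^2 z(\alpha-\theta)$ and to verify $\|E\|_{C^{0,\gamma}}\leq C$ under (\ref{smoothness}). Reading (\ref{denominator1}), every summand of $D^2-D_0^2$ is either a coefficient of order $\varepsilon$ times $z(\beta)$, or a coefficient of order $\varepsilon$ times $\varepsilon^2 z(\alpha-\theta)$; dividing by $D_0^2$ gives an $O(\varepsilon)$ contribution in $C^{0,\gamma}$ in each case. The one subtle summand is the exceptional term $\varepsilon^4(\rho(\alpha)-\rho(\theta))^2$: dividing by $D_0^2$ bounds it by $\varepsilon^2(\rho(\alpha)-\rho(\theta))^2/z(\alpha-\theta)$, and since $z(t)\asymp t^2$ on $[-\pi,\pi]$ and $\rho\in C^{1,\gamma}$, the identity $\rho(\alpha)-\rho(\theta)=(\alpha-\theta)\int_0^1\rho'(\theta+s(\alpha-\theta))\,ds$ gives $(\rho(\alpha)-\rho(\theta))^2/z(\alpha-\theta)\in C^{0,\gamma}$ with norm bounded by $C\|\rho\|_{C^{1,\gamma}}^2$. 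Hence $1/D^3=D_0^{-3}(1+S_D)$ with $\|S_D\|_{C^{0,\gamma}}\leq C\varepsilon$, and analogously $1/\omega(\alpha)=\omega_0^{-1}(1+S_\omega)$ with $\|S_\omega\|_{C^{0,\gamma}}\leq C\varepsilon$ from (\ref{ro-w-form}).

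The second step is to group the summands of (\ref{numerators1}) according to which kernel they feed. In $N^r$ the terms proportional to $z(\beta)\cos\theta$ collect into $-\tfrac12 z(\beta)\cos\theta\bigl(1+\varepsilon\cos\alpha+\varepsilon^2\rho(\alpha)\cos\alpha\bigr)$, those proportional to $z(\alpha-\theta)$ into $-\tfrac{\varepsilon}{2}z(\alpha-\theta)\bigl(1+\varepsilon\rho(\alpha)\bigr)$, and the remaining summand is $\varepsilon^2(\rho(\alpha)-\rho(\theta))$. Multiplying each group by $1/(D^3\omega(\alpha))=D_0^{-3}\omega_0^{-1}(1+S_D)(1+S_\omega)$ produces $(2\omega_0)^{-1}K_h^r(1+S_h^r)$ for $h=3,2,1$ respectively, where $1+S_h^r$ is the corresponding parenthetical factor multiplied by $(1+S_D)(1+S_\omega)$. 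The same decomposition applied to $N^\theta$ yields $K_2^\theta$ from the $z(\beta)\sin\theta$ group and $K_1^\theta$ from the $2\varepsilon\sin(\alpha-\theta)$ group, with analogous $S_h^\theta$.

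The third step is the verification of (\ref{ssmoothness}). By construction each $S_h^r,S_h^\theta$ is a polynomial combination of $\rho(\alpha),\rho(\theta),w(\alpha),\cos\alpha,\sin\alpha,\cos\theta,\sin\theta,S_D,S_\omega$ whose zeroth-order term in $\varepsilon$ vanishes, so the $O(\varepsilon)$ bound in $C^{0,\gamma}$ follows from the Banach algebra property of $C^{0,\gamma}$ together with (\ref{smoothness}). The Lipschitz bound comes from differentiating these polynomial expressions and using the Lipschitz dependence on $(\rho,w)$ of $S_D$ and $S_\omega$; the only non-polynomial piece is again $(\rho(\alpha)-\rho(\theta))^2/z(\alpha-\theta)$, which depends on $\rho$ in a quadratic-Lipschitz way via the same mean value representation. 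The main technical obstacle is precisely this last $C^{0,\gamma}$ control inside $E$: without the $C^{1,\gamma}$ hypothesis on $\rho$ the quantity need not be uniformly bounded, and the whole expansion into the clean form (\ref{componentsexpansion}) collapses; once that step is handled, the rest is systematic bookkeeping.
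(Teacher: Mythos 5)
Your proposal follows essentially the same route as the paper's proof: factor $D^2=D_0^2(1+d)$ with $D_0^2=z(\beta)+\varepsilon^2 z(\alpha-\theta)$, expand $1/D^3$ and $1/\omega$ as $1$ plus $O(\varepsilon)$, Lipschitz-in-$(\rho,w)$ remainders in $C^{0,\gamma}$, and regroup $2N^r,\,2N^\theta$ exactly as in (\ref{numerators2}) so that each group divided by $D_0^3$ produces the kernels (\ref{kernels}) times factors $1+S_h$ satisfying (\ref{ssmoothness}). Your explicit handling of the term $(\rho(\alpha)-\rho(\theta))^2/z(\alpha-\theta)$ via the mean-value representation is the one place where you spell out a detail the paper only asserts in (\ref{drhosmoothness}); otherwise the two arguments coincide.
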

 Proof. If we factor out $D_0^2(\beta,\alpha-\theta)=z(\beta)+\varepsilon^2z(\alpha-\theta)$ from the expression of
    $D^2$ given by (\ref{denominator1}) we get  $D^2=D_0^2(1+d(\rho;\alpha,\beta,\theta,\varepsilon))$ with
    \begin{eqnarray}\label{denominator2}
  d(\rho;\alpha,\beta,\theta,\varepsilon) &=& \frac{z(\beta)}{D_0^2(\beta,\alpha-\theta)} (\varepsilon(\cos\alpha+\cos\theta )\\\nonumber
   &&+\varepsilon^2(\cos\alpha\cos\theta+\rho(\alpha)\cos\alpha+\rho(\theta)\cos\theta)\\\nonumber
   &&+\varepsilon^3(\rho(\alpha)+\rho(\theta))\cos\alpha\cos\theta
   +\varepsilon^4\rho(\alpha)\rho(\theta)\cos\alpha\cos\theta)\\\nonumber
   &&+\frac{\varepsilon^2 z(\alpha-\theta)}{D_0^2(\beta,\alpha-\theta)}(\varepsilon(\rho(\alpha)+\rho(\theta))+\epsilon^2(\rho(\alpha)\rho(\theta)\\\nonumber
   &&+\frac{(\rho(\alpha)-\rho(\theta))^2}{z(\alpha-\theta)})) .
 \end{eqnarray}

    From (\ref{smoothness}) and (\ref{denominator2}) it follows
    \begin{eqnarray}\label{drhosmoothness}
      \|d(\rho)\|_{C^{0,\gamma}} &\leq& C\varepsilon(1+\vert|\rho \vert|_{C^{1,\gamma}}),\\\nonumber
      \|d(\rho_1)-d(\rho_2)\|_{C^{0,\gamma}} &\leq& C\varepsilon\vert|\rho_1-\rho_2 \vert|_{C^{1,\gamma}}.
    \end{eqnarray}
    and therefore we have

  \begin{eqnarray}\label{unosud}\frac{1}{D^3}=\frac{1}{D_0^3(\beta,\alpha-\theta)}\left(1+\sum_{n=1}^\infty \left(\begin{array}{c}
 -\frac{3}{2}\\
 n
   \end{array}
\right)d(\rho;\alpha,\beta,\theta,\varepsilon)^n\right),
 \end{eqnarray}
where, if $\varepsilon\in(0,\varepsilon_0]$ for some $\varepsilon_o>0$, the series on the r.h.s converges
absolutely and uniformly on $S^1\times S^1\times S^1$.

 Similarly
   \begin{eqnarray}\label{unosuomega}\frac{1}{\omega}=\frac{1}{\omega_0}\left(1+\sum_{m=1}^\infty \varepsilon^m\left(\begin{array}{c}
 -1\\
 m
   \end{array}
\right)w^m\right)=\frac{1}{\omega_0}\left(1+\sum_{m=1}^\infty(-1)^m\varepsilon^mw^m\right).
 \end{eqnarray}
 In conclusion
 we can rewrite (\ref{forcecomponents}) in the form

  \begin{eqnarray}\label{forcecomponentfr}
f^r &=&\frac{G\mu C(R)}{2\omega_0 R^2}\int_{S^1\times S^1}\frac{2N^r}{D_0^3}
\left(1+S(\rho,w)\right)d\alpha d\beta,\\\nonumber
 f ^\theta &=&\frac{G\mu C(R)}{2\omega_0 R^2}\int_{S^1\times S^1}\frac{2N^\theta}{D_0^3} \left(1+S(\rho,w)\right)d\alpha d\beta.
 \end{eqnarray}

 where
 \begin{eqnarray}\label{esse}
   S(\rho,w)&=& \sum_{n=1}^\infty \sum_{m=0}^n(-1)^m\varepsilon^m \left(\begin{array}{c}
 -\frac{3}{2}\\
 n-m
   \end{array}
\right)w^m d(\rho)^{n-m}
   \end{eqnarray}

 From (\ref{drhosmoothness}) and the inequalities
 \begin{eqnarray}
 \vert|\Pi_1^nu_j\vert|_{C^{0,\gamma}} &\leq& (n+1)\Pi_1^n\vert|u_j\vert|_{C^{0,\gamma}}  \\\nonumber
 \vert|\Pi_1^nu_j-\Pi_1^nv_j\vert|_{C^{0,\gamma}} &\leq& \\\nonumber &&n\sum_{h=1}^n\Pi_0^{h-1}\vert|v_j\vert|_{C^{0,\gamma}}\Pi_{h+1}^n\vert|u_j\vert|_{C^{0,\gamma}}
 \vert|u_h-v_h\vert|_{C^{0,\gamma}}
 \end{eqnarray}
 valid for $u_j,v_j\in C^{0,\gamma},j=1,\dots,n,\;\; u_0=u_{n+1}=v_0=v_{n+1}=1$  we see that
 \begin{eqnarray}
   \vert|S(\rho,w)\vert|_{C^{0,\gamma}}&\leq& \sum_{n=1}^\infty \varepsilon^n(n+1)C^n\sum_{m=0}^n \left\vert\left(\begin{array}{c}
 -\frac{3}{2}\\
 n-m
   \end{array}
\right)\right\vert \\\nonumber
 &\leq&\sum_{n=1}^\infty \varepsilon^nC_1^n\\\nonumber
 \vert|S(\rho_1,w_1)-S(\rho_2,w_2)\vert|_{C^{0,\gamma}}&\leq&\sum_{n=1}^\infty \varepsilon^nn^2C^{n-1}\sum_{m=0}^n \left\vert\left(\begin{array}{c}
 -\frac{3}{2}\\
 n-m
   \end{array}
\right)\right\vert \vert|(\rho_1,w_1)-(\rho_2,w_2)\vert|\\\nonumber
 &\leq&\sum_{n=1}^\infty \varepsilon^n C_1^n\vert|(\rho_1,w_1)-(\rho_2,w_2)\vert|,
 \end{eqnarray}
 where $C \text{ and } C_1$ are constants independent on $n$.
  From this and  the expressions
  (\ref{numerators1}) of $N^r$ and $N^\theta$ that we rewrite as
  \begin{eqnarray}\label{numerators2}\left\{\begin{array}{lll}
   2N^r &=& -z(\beta)\cos\theta(1 +\varepsilon \cos\alpha
   +\varepsilon^2 \rho(\alpha)\cos\alpha)\\ & &
   -\varepsilon z(\alpha-\theta)(1+\varepsilon\rho(\alpha))+2\varepsilon^2(\rho(\alpha)-\rho(\theta)),\\
   2N^\theta &=&z(\beta)\sin\theta(1+\varepsilon\cos\alpha+\varepsilon^2\rho(\alpha)\cos\alpha)\\& &  +2\varepsilon\sin(\alpha-\theta)(1+\varepsilon\rho(\alpha)),
   \end{array}
\right.
 \end{eqnarray}  the lemma follows.

   \begin{lem}\label{mapdefinedbyk}
 Let $\sigma:S^1\times S^1\times S^1\rightarrow\RR$ a map of class $C^{0,\gamma}, \gamma\in(0,1).$
 Then
 \begin{eqnarray}\label{int-op}
  \mathcal{K}_1^r\sigma(\theta) &=& \int_{S^1\times S^1}^*K_1^r(\alpha,\beta,\theta,\varepsilon)\sigma(\alpha,\beta,\theta)d\alpha d\beta,\hskip.3cm  \\
   \mathcal{K}_j \sigma(\theta) &=& \int_{S^1\times S^1}^*K_j (\alpha,\beta,\theta,\varepsilon)\sigma(\alpha,\beta,\theta)d\alpha d\beta,\hskip.3cm j=1,2,3,
   \end{eqnarray}
 define  continuous linear maps $\mathcal{K}_1^r, \mathcal{K}_j :C^{0,\gamma}\rightarrow C^{0,\gamma^\prime}, j=1,2,3$ for
 all $0<\gamma^\prime<\frac{\gamma}{1+\gamma}.$ Moreover:
 \begin{eqnarray}\label{bounds}
   \vert|\mathcal{K}_1^r\vert| &<& C_{\gamma^\prime} \vert|\rho\vert|_{C^{1,\gamma}} ,\\\nonumber
   \vert|\mathcal{K}_1 \vert| &<& C_{\gamma^\prime},\\\nonumber
   \vert|\mathcal{K}_2 \vert|  &<& C ,\\\nonumber
   \vert|\mathcal{K}_3 \vert| &<& C(1+\log{\frac{1}{\varepsilon}}),
 \end{eqnarray}
where $C$ and $C_{\gamma^\prime}$ are constants and $\lim_{\gamma^\prime\rightarrow\frac{\gamma}{1+\gamma}} C_{\gamma^\prime}=+\infty.$
 \end{lem}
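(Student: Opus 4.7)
The plan is to prove boundedness of the four operators and their Hölder regularity in two stages. First, I will establish the $L^\infty$ bounds (\ref{bounds}) by an anisotropic scaling argument combined, where necessary, with the cancellation afforded by the symmetry of the kernels under $u \to -u$ (with $u := \alpha - \theta$). Second, I will upgrade these to $C^{0,\gamma'}$ bounds via the standard near/far decomposition of singular integrals.

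For the $L^\infty$ bounds, after the substitution $u=\alpha-\theta$, $v=\beta$ and using $z(t)=t^2+O(t^4)$, the denominator behaves like $(v^2+\varepsilon^2 u^2)^{3/2}$ near the singularity $(u,v)=(0,0)$. Passing to anisotropic polar coordinates $u=\rho\cos\phi$, $v=\varepsilon\rho\sin\phi$ with area element $\varepsilon\rho\,d\rho\,d\phi$, the denominator becomes $\varepsilon^3\rho^3$ times a bounded $\phi$-dependent factor. For $\mathcal{K}_3$ the numerator $z(\beta)\sim v^2$ cancels two powers of $\rho$; the residual $\phi$-integral $\int\sin^2\phi\,(\sin^2\phi+\varepsilon^2\cos^2\phi)^{-3/2}d\phi\sim\log(1/\varepsilon)$ produces the claimed logarithmic bound. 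For $\mathcal{K}_2$ the factor $\varepsilon^2 u^2$ in the numerator supplies the analogous cancellation, with a $\phi$-integral that is uniformly bounded in $\varepsilon$. For $\mathcal{K}_1$ and $\mathcal{K}_1^r$ the naive absolute-value estimate diverges, but the numerators $\sin u$ and $\rho(\alpha)-\rho(\theta)\approx\rho'(\theta)u$ are odd in $u$ to leading order, while the denominator is even. Writing $\sigma(\theta+u,v,\theta)=\sigma(\theta,v,\theta)+[\sigma(\theta+u,v,\theta)-\sigma(\theta,v,\theta)]$ for $\mathcal{K}_1$ and Taylor-expanding $\rho(\theta+u)-\rho(\theta)$ for $\mathcal{K}_1^r$, the odd leading terms integrate to zero by symmetry, and the remainders are of order $|u|^{1+\gamma}$, yielding integrals of the form $\int\varepsilon^2|u|^{1+\gamma}(v^2+\varepsilon^2 u^2)^{-3/2}du\,dv$ that converge uniformly in $\varepsilon$ and produce the stated $\|\rho\|_{C^{1,\gamma}}$-dependent bound.

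For Hölder regularity, fix $\theta_1,\theta_2\in S^1$ with $\eta:=|\theta_1-\theta_2|$ small, and split the $(\alpha,\beta)$-integration into a near region $N_\delta$ (a neighborhood of $(\theta_i,0)$ of anisotropic radius $\delta$ in $u$ and $\varepsilon\delta$ in $v$) and its complement. On $N_\delta$, apply the previous step's $L^\infty$-type estimates separately to $\mathcal{K}\sigma(\theta_1)$ and $\mathcal{K}\sigma(\theta_2)$, picking up a factor $\delta^\gamma\|\sigma\|_{C^{0,\gamma}}$ from the modulus of continuity of $\sigma$. On the complement, differentiate the kernel in $\theta$, gaining a factor $\eta$ at the cost of one power of $(v^2+\varepsilon^2 u^2)^{1/2}$, so that the resulting integral scales like $\eta/\delta$. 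Optimizing $\delta\sim\eta^{1/(1+\gamma)}$ yields the modulus $\eta^{\gamma/(1+\gamma)}$; the blow-up of $C_{\gamma'}$ as $\gamma'\to\gamma/(1+\gamma)$ reflects a logarithmic correction that must be absorbed by lowering the exponent.

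The main technical obstacle is to keep all constants uniform in the singular parameter $\varepsilon$ while tracking the sharp $\log(1/\varepsilon)$ dependence in $\mathcal{K}_3$ and the linear $\|\rho\|_{C^{1,\gamma}}$ dependence in $\mathcal{K}_1^r$. A secondary difficulty is the correct handling of the anisotropic quasi-metric $(u^2+\varepsilon^{-2}v^2)^{1/2}$ associated to the kernels, which differs markedly from the Euclidean distance on $S^1\times S^1$ as $\varepsilon\to 0$; the choice of splitting radius $\delta$ and the precise exponent balance must be tuned carefully to reach the stated threshold $\gamma/(1+\gamma)$.
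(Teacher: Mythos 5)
Your $C^0$ estimates follow essentially the paper's own path (the change of variable $\beta=\varepsilon\eta$, i.e.\ anisotropic rescaling, together with the odd/even cancellation in $\alpha-\theta$ that compensates the non--absolutely--integrable kernels $K_1,K_1^r$, and the direct computation giving the $\log\frac{1}{\varepsilon}$ for $\mathcal{K}_3$ and the uniform bound for $\mathcal{K}_2$); apart from a bookkeeping slip in the polar coordinates for $\mathcal{K}_3$ (the $\phi$-integral you write belongs to isotropic polars $u=\rho\cos\phi$, $v=\rho\sin\phi$, not to the anisotropic ones you introduce), this part is fine. Where you genuinely diverge from the paper is the H\"older step: the paper symmetrizes the integrand into $\psi^\pm$ and applies a family of interpolated pointwise estimates (items (i)--(v) in its proof) with exponent $\nu'=\gamma'$, so that integrability of $|\alpha|^{(1+\gamma)(1-\gamma')-2}$ forces $\gamma'<\frac{\gamma}{1+\gamma}$ and produces the blow-up of $C_{\gamma'}$; you instead use a near/far splitting at anisotropic radius $\delta$, differentiate the kernel on the far region, and optimize $\delta\sim\eta^{1/(1+\gamma)}$. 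Your scheme works and, if carried out, even reaches the endpoint modulus $\eta^{\gamma/(1+\gamma)}$ (the near region gives $\delta^\gamma$, the far kernel-difference gives $\eta/\delta$, and the leftover terms are of order $\eta^\gamma\log\frac{1}{\delta}$, which is subcritical), so it proves a slightly stronger statement than the lemma; the paper's interpolation route avoids the splitting parameter but only gets exponents strictly below the threshold. Two points in your sketch need to be made explicit for the argument to close: on the far region the $\theta$-dependence of the integrand is not only through the kernel — $\sigma(\alpha,\beta,\theta)$ depends on $\theta$ in its third slot and is only $C^{0,\gamma}$ there, and for $K_1^r$ the numerator contains $\rho(\theta)$ and the denominator contains $z(\alpha-\theta)$ — so these contributions must be estimated as separate difference terms of size $\eta^\gamma\int_{\mathrm{far}}|K|$ (harmless, since $\int_{\mathrm{far}}|K_1|\lesssim\log\frac{1}{\delta}$ and $\int|K_3|\lesssim\log\frac{1}{\varepsilon}$), not by differentiation; and for the $L^\infty$ bound on $\mathcal{K}_1^r$ the cancellation requires both devices simultaneously, i.e.\ you must split $\sigma$ at $u=0$ \emph{and} Taylor-expand $\rho$, because $\rho'(\theta)u\,\sigma(\theta+u,v,\theta)$ alone is not odd in $u$; only the product $\rho'(\theta)u\,\sigma(\theta,v,\theta)$ integrates to zero, the cross terms being $O(|u|^{1+\gamma})$.
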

\begin{proof} If we replace $\alpha$ with $\alpha+\theta$ in (\ref{int-op})$_1$, we have

  \begin{eqnarray}\label{restrictedintegral}
    \mathcal{K}_1^r\sigma(\theta) &=& \int_{(-\pi,\pi)^2}^*
    K_1^r(\alpha+\theta,\beta,\theta,\varepsilon)\sigma(\alpha+\theta,\beta,\theta)d\alpha d\beta.
  \end{eqnarray}
Given $\psi:(S^1)^3\rightarrow\RR$ let
\begin{eqnarray}\label{evenodd}
    \psi^\pm(\alpha+\theta,\beta,\theta)&=&\frac{1}{2}(\psi(\alpha+\theta,\beta,\theta)
    \pm\psi(-\alpha+\theta,\beta,\theta)),
    \end{eqnarray}
and observe that $\psi\in C^{0,\gamma}\Rightarrow \vert|\psi^\pm\vert|_{C^{0,\gamma}}\leq\vert|\psi \vert|_{C^{0,\gamma}}$ and moreover that:
\begin{description}
  \item[(i)] $\psi\in C^{k,\gamma},\hskip.2cm k=0,1\hskip.4cm\Rightarrow $
   $$\vert\psi^-(\alpha+\theta,\beta,\theta)
     \vert\leq\vert|\psi\vert|_{C^{0,\gamma}}
    \vert\alpha\vert^\gamma,\hskip.2cm\rm{if}\hskip.2cm k=0,$$
   $$\vert\psi^-(\alpha+\theta,\beta,\theta)
     \vert\leq\vert|\psi\vert|_{C^{1,\gamma}}
    \vert\alpha\vert,\hskip.2cm\rm{if}\hskip.2cm k=1,$$
     \item[(ii)] $\psi\in C^{1,\gamma},  \hskip.4cm \rm{and}\hskip.2cm \psi(\theta,\beta,\theta)=0\hskip.4cm  \Rightarrow $
  $$\vert\psi^+(\alpha+\theta,\beta,\theta)
     \vert\leq\vert|\psi\vert|_{C^{1,\gamma}}
    \vert\alpha\vert^{1+\gamma},$$
      \item[(iii)] $\psi\in C^{0,\gamma}\hskip.4cm\Rightarrow $
   $$\vert\psi^-(\alpha+\theta_1,\beta,\theta_1)
    -\psi^-(\alpha+\theta_2,\beta,\theta_2)\vert\leq\vert|\psi\vert|_{C^{0,\gamma}}
    \vert\alpha\vert^\nu\vert\theta_1-\theta_2\vert^{\gamma-\nu}.$$
    If $ \psi(\theta,\beta,\theta)=0,$ the same is true for $\psi^+.$ Otherwise it results
     $$\vert\psi^+(\alpha+\theta_1,\beta,\theta_1)
    -\psi^+(\alpha+\theta_2,\beta,\theta_2)\vert\leq\vert|\psi\vert|_{C^{0,\gamma}}
     \vert\theta_1-\theta_2\vert^\gamma\hskip1.2cm$$
     \item[(iv)] $\psi\in C^{1,\gamma}   \hskip.4cm   \Rightarrow $
  $$\vert\psi^-(\alpha+\theta_1,\beta,\theta_1)
    -\psi^-(\alpha+\theta_2,\beta,\theta_2)\vert\leq\vert|\psi\vert|_{C^{1,\gamma}}
    \vert\alpha\vert^{1-\nu^\prime}\vert\theta_1-\theta_2\vert^{\nu^\prime},$$
  \item[(v)] $\psi\in C^{1,\gamma},   \hskip.4cm \rm{and}\hskip.2cm \psi(\theta,\beta,\theta)=0\hskip.4cm  \Rightarrow $
  $$\vert\psi^+(\alpha+\theta_1,\beta,\theta_1)
    -\psi^+(\alpha+\theta_2,\beta,\theta_2)\vert\leq\vert|\psi\vert|_{C^{1,\gamma}}
    \vert\alpha\vert^{(1+\gamma)(1-\nu^\prime)}\vert\theta_1-\theta_2\vert^{\nu^\prime},$$
   \end{description}
     where $\nu\in[0,\gamma]$ and $\nu^\prime\in[0,1]$ are arbitrary numbers.   Let $\mathcal{K}_1^{r,\pm}$ the operator defined by (\ref{restrictedintegral}) when $K_1^r$ is replaced by $K_1^{r,\pm}.$ Then we have:
     \begin{eqnarray*}
       \mathcal{K}_1^{r,\pm}\sigma^\mp &=& 0,
     \end{eqnarray*}
     and therefore:
      \begin{eqnarray*}
       \mathcal{K}_1^r\sigma &=& \mathcal{K}_1^{r,+}\sigma^++ \mathcal{K}_1^{r,-}\sigma^-.
     \end{eqnarray*}
     Applying {\rm (ii)} to the function $\rho(\alpha)-\rho(\theta)$ we get after setting $\beta=\varepsilon\eta$
      \begin{eqnarray}\label{evenpart}
       \vert\mathcal{K}_1^{r,+}\sigma^+\vert&\leq& \vert|\rho\vert|_{C^{1,\gamma}}\vert|\sigma\vert|_{C^{0,\gamma}}
       \int_{-\pi}^{\pi}\int_{-\pi}^{\pi}d\alpha d\eta
       \frac{\vert\alpha\vert^{1+\gamma}}{(\eta^2\frac{z(\varepsilon\eta)}{\varepsilon^2\eta^2}+z(\alpha))^\frac{3}{2}}\\\nonumber &\leq& Const\vert|\rho\vert|_{C^{1,\gamma}}\vert|\sigma\vert|_{C^{0,\gamma}},
     \end{eqnarray}
     where we have also observed that
 \begin{eqnarray}\label{zeta-su-eta}
 1>\frac{z(s)}{s^2}\geq\frac{4}{\pi^2},\;\;s\in(0,\pi].
 \end{eqnarray}
     Applying {\rm (i)} with $k=1$ to $\rho(\alpha)-\rho(\theta)$ and {\rm (i)} with $k=0$ to $\sigma$ we see that the estimate (\ref{evenpart}) is valid for $\vert\mathcal{K}_1^{r,-}\sigma^-\vert$ too and we conclude that
       \begin{eqnarray}\label{cestimate}
       \vert|\mathcal{K}_1^r \sigma\vert|_{C^0}&\leq& Const\vert|\rho\vert|_{C^{1,\gamma}}\vert|\sigma\vert|_{C^{0,\gamma}}
       .
  \end{eqnarray}
  We also have
  \begin{eqnarray}\label{evenpartholder}
      \vert\mathcal{K}_1^{r,+}\sigma^+(\theta_1)-\mathcal{K}_1^{r,+}\sigma^+(\theta_2)\vert\leq \hskip3cm  \\\nonumber
       \int_{(-\pi,\pi)^2}
       \vert K _1^{r,+}(\theta_1)-K _1^{r,+}(\theta_2)\vert\vert\sigma^+\vert d\alpha d\beta +\int_{(-\pi,\pi)^2} \vert K _1^{r,+}\vert\vert\sigma^+(\theta_1)-\sigma^+(\theta_2)\vert d\alpha d\beta.
     \end{eqnarray}
     To estimate the first integral in (\ref{evenpartholder}) we apply {\rm (v)} with $\nu^\prime=\gamma^\prime<\frac{\gamma}{1+\gamma}$ to $\rho(\alpha)-\rho(\theta)$,   for the second integral we use instead   {\rm (ii)}   for $\rho(\alpha)-\rho(\theta)$ and {\rm (iii)}   for $\sigma.$
    With the change of variable $\beta=\varepsilon\eta$ this yields
     \begin{eqnarray}\label{evenpartholder1}
       \vert\mathcal{K}_1^{r,+}\sigma^+(\theta_1)-\mathcal{K}_1^{r,+}\sigma^+(\theta_2)\vert\leq \vert|\rho\vert|_{C^{1,\gamma}}\vert|\sigma\vert|_{C^{0,\gamma}}\vert\theta_1-\theta_2\vert^{\gamma^\prime} &&\\\nonumber
       \int_{-\pi}^{\pi}\int_{-\frac{\pi}{\varepsilon}}^{\frac{\pi}{\varepsilon}}
       \frac{\vert\alpha\vert^{(1+\gamma)(1-\gamma^\prime)}+\vert\alpha\vert^{1+\gamma  }}{(\eta^2\frac{z(\varepsilon\eta)}{\varepsilon^2\eta^2}+z(\alpha))^\frac{3}{2}}d\alpha d\eta.
      \end{eqnarray}It follows
     \begin{eqnarray}\label{cgammaestimate}
       \frac{\vert\mathcal{K}_1^{r,+}\sigma^+(\theta_1)-\mathcal{K}_1^{r,+}\sigma^+(\theta_2)\vert}
       {\vert\theta_1-\theta_2\vert^{\gamma^\prime}} &\leq&  C_{\gamma^\prime}\vert|\rho\vert|_{C^{1,\gamma}}\vert|\sigma\vert|_{C^{0,\gamma}},
     \end{eqnarray} where $C_{\gamma^\prime}$ depends on $0<\gamma^\prime<\frac{\gamma}{1+\gamma}.$
     This and (\ref{cestimate})
     imply
     \begin{eqnarray*}
       \vert|\mathcal{K}_1^{r,+}\sigma^+\vert|_{C^{0,\gamma^\prime}} &\leq&  C_{\gamma^\prime}\vert|\rho\vert|_{C^{1,\gamma}}\vert|\sigma\vert|_{C^{0,\gamma}}.
     \end{eqnarray*} In a similar way we establish that also $K_1^{r,-}\sigma^-$ satisfies (\ref{cgammaestimate}).
     Indeed, after writing (\ref{evenpartholder}) for $K_1^{r,-}\sigma^-$, we use, for the first integral, {\rm (iv)}   with $\nu^\prime=\gamma^\prime$ for $\rho(\alpha)-\rho(\theta)$ and {\rm (i)}  with $k=1$ for $\sigma$ and for the second integral {\rm (i)} with $k=0,$  for $\rho(\alpha)-\rho(\theta)$ and {\rm (iii)} with $\nu=\gamma-\gamma^\prime$ for $\sigma.$ This concludes the proof for the operator $\mathcal{K}_1^r$.
     The analysis of the other operators  follows the same path but it is simpler due to the fact that the corresponding kernels   depend on $\alpha$ and $\theta$ only through the difference $\alpha- \theta$. Moreover the singularity of the kernel $K_1 $ is weaker while the kernels $K_2$ and $K_3$ are not singular. To estimate the norms of  $\mathcal{K}_3$ we use again the change of variable $\beta=\varepsilon\eta$. Then  we have
  \begin{eqnarray}\label{evenpart-1}
      \hskip1cm\mathcal{K}_3
       &=&4\int_0^{\pi}\int_0^{\frac{\pi}{\varepsilon}}d\alpha d\eta
       \frac{\eta^2\frac{z(\varepsilon\eta)}{\varepsilon^2\eta^2}}
       {(\eta^2\frac{z(\varepsilon\eta)}{\varepsilon^2\eta^2}+z(\alpha))^\frac{3}{2}}\\\nonumber
       &=&4\int_0^{\pi}\int_1^{\frac{\pi}{\varepsilon}}d\alpha d\eta
       \frac{\eta^2\frac{z(\varepsilon\eta)}{\varepsilon^2\eta^2}}
       {(\eta^2\frac{z(\varepsilon\eta)}{\varepsilon^2\eta^2}+z(\alpha))^\frac{3}{2}}+C,
       \end{eqnarray}
       and therefore using also (\ref{zeta-su-eta})
       \begin{eqnarray}\label{k3-two-pieces}
       \mathcal{K}_3
       \leq 2\pi^2\int_1^{\frac{\pi}{\varepsilon}}\frac{d\eta}{\eta}+C\leq
       C_1(1+\log{\frac{1}{\varepsilon}}).
     \end{eqnarray}
     This completes the proof.
     \end{proof}
\begin{rem}\label{k3-remark}
We also have
\begin{eqnarray}\label{k3-from-below}
\mathcal{K}_3\geq c_1(1+\log{\frac{1}{\varepsilon}}).
\end{eqnarray}
This follows from (\ref{evenpart-1}) (\ref{zeta-su-eta}) and the fact that $x\rightarrow\frac{x^3}{(x^2+a)^\frac{3}{2}},\, a>0,$ is increasing for $x\in(0,+\infty)$ and therefore
\begin{eqnarray}
\int_0^{\pi}\int_1^\frac{\pi}{\varepsilon}d\alpha d\eta
       \frac{\eta^2\frac{z(\varepsilon\eta)}{\varepsilon^2\eta^2}}
       {(\eta^2\frac{z(\varepsilon\eta)}{\varepsilon^2\eta^2}+z(\alpha))^\frac{3}{2}}\geq
  \int_0^{\pi}d\alpha\frac{(\frac{4}{\pi^2})^\frac{3}{2}}
  {(\frac{4}{\pi^2}+z(\alpha))^\frac{3}{2}}\int_1^\frac{\pi}{\varepsilon}\frac{d\eta}{\eta},
\end{eqnarray}
which implies (\ref{k3-from-below}).
\end{rem}
In the following the symbol $O(\varepsilon^k)^{0,\gamma^\prime}$ stands for a map $h(\rho,w;\alpha,\theta,\varepsilon)$ which, provided assumption (\ref{smoothness}) holds, is such that

 \begin{eqnarray}\label{ssmoothness-h}
   \vert| h(\rho,w)\vert|_{C^{0,\gamma^\prime}} &\leq& C\varepsilon^k ,\\\nonumber
   \vert| h(\rho_1,w_1)-h(\rho_2,w_2)\vert|_{C^{0,\gamma^\prime}}&\leq&
   C\varepsilon^k\vert|(\rho_1,w_1)-(\rho_2,w_2)\vert|,
 \end{eqnarray}
 for some constant $C>0$ independent of $\varepsilon$.

 If instead (\ref{ssmoothness-h}) holds with a constant $C=C(\varepsilon)$ that   depends on $\varepsilon$ and $\lim_{\varepsilon\rightarrow 0^+} C(\varepsilon)=0$ we say that $h(\rho,w;\alpha,\theta,\varepsilon)=o(\varepsilon^k)^{0,\gamma^\prime}$.
 A key step in the proof of Theorem \ref{main} is the characterization and the analysis of the terms of order $1$ and $\varepsilon$ of the expressions ${\mathcal F}^r, \hskip.1cm{\mathcal F}^\theta$   on
 the r.h.s of (\ref{system}).

 \begin{lem}\label{linearpart} Set $c(\varepsilon)=\frac{G\mu C(R)}{2\omega_0^3r_0^2 R}$ then
 \begin{eqnarray}\label{rhslinearpart}
   \frac{{\mathcal F}^r}{\omega_0r_0} &=& \varepsilon c(\varepsilon)(
    \frac{1}{\varepsilon}\mathcal{K}_2 +(\mathcal{K}_1^r+ \mathcal{K}_2\sigma-w\mathcal{K}_2)\\\nonumber
   & &-\frac{\cos(\cdot)}{2\pi}\langle\cos(\cdot)
   (
    \mathcal{K}_1^r+\mathcal{K}_2\sigma-w\mathcal{K}_2 )-\sin(\cdot)\mathcal{K}_1\sigma
  \rangle) +o(\varepsilon)^{0,\gamma^\prime},
  \\\nonumber
   \frac{{\mathcal F}^\theta}{\omega_0r_0} &=&\varepsilon c(\varepsilon)
    (\mathcal{K}_1\sigma  \\\nonumber
   & &+\frac{\sin(\cdot)}{2\pi}\langle\cos(\cdot)
   (
    \mathcal{K}_1^r+\mathcal{K}_2\sigma-w\mathcal{K}_2 )-\sin(\cdot)\mathcal{K}_1\sigma
  \rangle) +o(\varepsilon)^{0,\gamma^\prime},
  \end{eqnarray}
 where
 \begin{eqnarray}\label{sigma}
        \sigma &=& \rho(\alpha)-w(\alpha)\\\nonumber&&-\frac{3}{2}
        (\frac{z(\beta)}{D_0^2(\beta,\alpha-\theta)}(\cos\alpha+\cos\theta)
        +\frac{\varepsilon^2 z(\alpha-\theta)}{D_0^2(\beta,\alpha-\theta)}(\rho(\alpha)+\rho(\theta)).
      \end{eqnarray}
 \end{lem}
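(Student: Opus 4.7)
The plan is to insert the representation from Lemma \ref{integrands} together with the $\varepsilon$-expansions of $1/\omega(\theta)$ and $1/(R+r\cos\theta)^3$ into the right-hand side of (\ref{system}), and to collect contributions of order $\varepsilon c(\varepsilon)$ in the $C^{0,\gamma'}$-topology. The central observation is that the log-singular leading piece of $f^r/\omega$, proportional to $\cos\theta\,\mathcal{K}_3$, is cancelled exactly by the subtracted centripetal-type term $-\frac{\cos\theta}{\omega(R+r\cos\theta)^3}\cdot\frac{\langle f/\omega\rangle}{\langle 1/(\omega(R+r\cos\theta)^3)\rangle}$, and symmetrically for $f^\theta/\omega$ and $\sin\theta\,\mathcal{K}_3$ in ${\mathcal F}^\theta$; after this cancellation only the bounded corrections listed in (\ref{rhslinearpart}) remain.

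First I would decompose $2N^r/D_0^3$ from (\ref{numerators2}) using the kernel definitions (\ref{kernels}) as
$$\frac{2N^r}{D_0^3} = -\cos\theta\,K_3 + K_2^r - \varepsilon\cos\theta\cos\alpha\,K_3 + K_1^r + \rho(\alpha)\,K_2 + O(\varepsilon^2)/D_0^3,$$
and combine this with the expansion $S = -\varepsilon w(\alpha) - \tfrac{3\varepsilon}{2}\,d_1 + O(\varepsilon^2)^{0,\gamma}$ obtained from (\ref{esse}) and (\ref{denominator2}), where $d_1 = \tfrac{z(\beta)}{D_0^2}(\cos\alpha+\cos\theta) + \tfrac{\varepsilon^2 z(\alpha-\theta)}{D_0^2}(\rho(\alpha)+\rho(\theta))$ is the $O(1)$-bounded coefficient of $\varepsilon$ in $d$ (note that $\varepsilon^2 z(\alpha-\theta)/D_0^2 \le 1$). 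After multiplying by $(1+S)$, integrating, and pulling in the outer factor $1/\omega(\theta) = \omega_0^{-1}(1-\varepsilon w(\theta)+O(\varepsilon^2))$, the $\mathcal{K}_2$-type pieces coalesce into $\mathcal{K}_2(\rho(\alpha) - w(\alpha) - \tfrac{3}{2}d_1) = \mathcal{K}_2\sigma$ by the definition (\ref{sigma}), the $K_1^r$ piece gives $\mathcal{K}_1^r$, and the $-\varepsilon w(\theta)$ factor applied to the leading $(1/\varepsilon)\mathcal{K}_2$ produces $-w\,\mathcal{K}_2$. An identical analysis of $2N^\theta/D_0^3$ yields $\mathcal{K}_1\sigma$ as the sole order-$\varepsilon c(\varepsilon)$ contribution to $f^\theta$ beyond the $\sin\theta\,\mathcal{K}_3$ leading term, using that $\mathcal{K}_1(1)=0$ by the odd symmetry of $\sin(\alpha-\theta)$.

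Using $1/[\omega(R+r\cos\theta)^3] = \omega_0^{-1}R^{-3}(1+O(\varepsilon))$ and $\langle\cdot\rangle = 2\pi\omega_0^{-1}R^{-3}(1+O(\varepsilon^2))$ (the $O(\varepsilon)$ piece has vanishing mean because $\langle w\rangle = \langle\cos\theta\rangle = 0$), the subtracted term in (\ref{system})$_1$ reduces, modulo $o(\varepsilon)^{0,\gamma'}$, to $-\tfrac{\cos\theta}{2\pi\omega_0}\langle f/\omega\rangle$. Inserting the expansions of $f^r,f^\theta$ into $f = f^r\cos\theta - f^\theta\sin\theta$, the leading $\mathcal{K}_3$ content of $\langle f/\omega\rangle$ equals $\langle(-\cos^2\theta-\sin^2\theta)\rangle\,\mathcal{K}_3\cdot(\text{prefactor})/\omega_0 = -2\pi\,\mathcal{K}_3\cdot(\text{prefactor})/\omega_0$ (since $\mathcal{K}_3$ is constant in $\theta$), and the subtraction therefore contributes $+(\text{prefactor})\cos\theta\,\mathcal{K}_3/\omega_0$, exactly cancelling the $-(\text{prefactor})\cos\theta\,\mathcal{K}_3/\omega_0$ in $f^r/\omega$; the same mechanism handles $\sin\theta\,\mathcal{K}_3$ in ${\mathcal F}^\theta$. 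What survives of the subtraction at order $\varepsilon c(\varepsilon)$ is precisely $-\tfrac{\cos\theta}{2\pi}\langle\cos\theta(\mathcal{K}_1^r + \mathcal{K}_2\sigma - w\,\mathcal{K}_2) - \sin\theta\,\mathcal{K}_1\sigma\rangle$, yielding (\ref{rhslinearpart}).

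The remainder bound $o(\varepsilon)^{0,\gamma'}$ follows by combining Lemma \ref{mapdefinedbyk} (continuity of $\mathcal{K}_1^r,\mathcal{K}_j$ into $C^{0,\gamma'}$, with $\|\mathcal{K}_3\| = O(\log(1/\varepsilon))$ and the others $O(1)$) with the quadratic estimates on $S - (-\varepsilon w - \tfrac{3\varepsilon}{2}d_1)$ from Lemma \ref{integrands} and on $d - \varepsilon d_1$ from (\ref{denominator2}). The main obstacle is making the $\mathcal{K}_3$ cancellation robust at the $\varepsilon\log(1/\varepsilon)$ sub-order: contributions of the form $-\varepsilon\cos\theta\cos\alpha\,K_3\cdot(1+S)$ and the subleading corrections of $1/(R+r\cos\theta)^3$ and $1/\omega$ must also combine, through the averaging in the subtracted term, into a genuine $o(\varepsilon)^{0,\gamma'}$ residual. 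This uses the trigonometric identity $z(\alpha-\theta)\cos\theta + 2\sin\theta\sin(\alpha-\theta) = 2(\cos\theta-\cos\alpha)$ (which removes the apparent $z(\alpha-\theta)$ singularity from $\cos\theta\cdot 2N^r - \sin\theta\cdot 2N^\theta$) and the axial-symmetry identities (\ref{identities}).
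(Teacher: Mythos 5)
Your argument is correct and follows essentially the same route as the paper: expand the integrands via Lemma \ref{integrands}, (\ref{unosuomega}) and the kernels (\ref{kernels}) to obtain $f^r/\omega=\omega_0r_0\varepsilon c(\varepsilon)(\mathcal{K}_1^r-\cos\theta\,\mathcal{K}_3+\tfrac1\varepsilon\mathcal{K}_2+\mathcal{K}_2\sigma-w\mathcal{K}_2)+o(\varepsilon)^{0,\gamma^\prime}$ and $f^\theta/\omega=\omega_0r_0\varepsilon c(\varepsilon)(\sin\theta\,\mathcal{K}_3+\mathcal{K}_1\sigma)+o(\varepsilon)^{0,\gamma^\prime}$, then compute $\langle f/\omega\rangle$ (leading part $-2\pi\mathcal{K}_3$) and use $\frac{1}{\omega(R+r\cos(\cdot))^3}\big/\langle\frac{1}{\omega(R+r\cos(\cdot))^3}\rangle=\frac{1}{2\pi}+O(\varepsilon)$ so that the $\cos\theta\,\mathcal{K}_3$ and $\sin\theta\,\mathcal{K}_3$ terms cancel against the averaged centripetal term, leaving exactly (\ref{rhslinearpart}). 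Your closing precaution about the $\varepsilon\log\frac{1}{\varepsilon}$ sub-order (and the identity $z(\alpha-\theta)\cos\theta+2\sin\theta\sin(\alpha-\theta)=2(\cos\theta-\cos\alpha)$) is harmless but unnecessary: by the bound $\|\mathcal{K}_3\|\leq C(1+\log\frac{1}{\varepsilon})$ in (\ref{bounds}) those contributions are already $O(\varepsilon^2\log\frac{1}{\varepsilon})=o(\varepsilon)^{0,\gamma^\prime}$ without any further cancellation, which is how the paper disposes of them.
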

 Proof. From (\ref{esse}) it follows $S(\rho,w)=-\varepsilon w+\frac{3}{2}d(\rho)+o(\varepsilon)^{0,\gamma^\prime}$. This, (\ref{forcecomponentfr}), (\ref{numerators2}) and (\ref{unosuomega}) imply
   \begin{eqnarray}\label{forcecomponentorder1}
\hskip1.5cm\frac{f^r}{\omega} &=&\frac{G\mu C(R)}{2\omega_0^2 R^2}(1-\varepsilon w)(\mathcal{K}_1^r+\mathcal{K}_3^r+\mathcal{K}_2^r+\varepsilon\mathcal{K}_2^r\rho)\\\nonumber
&&\left(1-\varepsilon w-\frac{3}{2}d(\rho) \right)  +o(\varepsilon)^{0,\gamma^\prime},\\\nonumber
&=&\omega_0r_0\varepsilon c(\varepsilon)(\mathcal{K}_1^r -\cos\theta\mathcal{K}_3+\frac{1}{\varepsilon}\mathcal{K}_2
+\mathcal{K}_2(\rho-w-\frac{3}{2\varepsilon}d(\rho))-w(\theta)\mathcal{K}_2)+o(\varepsilon)^{0,\gamma^\prime},\\\nonumber
&=&\omega_0r_0\varepsilon c(\varepsilon)(\mathcal{K}_1^r -\cos\theta\mathcal{K}_3+\frac{1}{\varepsilon}\mathcal{K}_2
+\mathcal{K}_2\sigma-w(\theta)\mathcal{K}_2)+o(\varepsilon)^{0,\gamma^\prime},\\\nonumber
 \frac{f ^\theta}{\omega} &=&\frac{G\mu C(R)}{2\omega_0^2 R^2}(1-\varepsilon w)(\mathcal{K}_2^\theta+\mathcal{K}_1^\theta+\varepsilon\mathcal{K}_1^\theta\rho)\\\nonumber
 &&\left(1 -\varepsilon w+\frac{3}{2}d(\rho) \right)  +o(\varepsilon)^{0,\gamma^\prime}\\\nonumber
&=&\omega_0r_0\varepsilon c(\varepsilon)(\sin\theta\mathcal{K}_3+\mathcal{K}_1(\rho-w-\frac{3}{2\varepsilon}d(\rho))+o(\varepsilon)^{0,\gamma^\prime}
\\\nonumber
&=&\omega_0r_0\varepsilon c(\varepsilon)(\sin\theta\mathcal{K}_3+\mathcal{K}_1\sigma)+o(\varepsilon)^{0,\gamma^\prime},
 \end{eqnarray}
 where we have also used definitions (\ref{kernels}), the estimates (\ref{bounds}), the observation that constant functions are in the kernel of the operator $\mathcal{K}_1^\theta $ and (\ref{denominator2}) for computing the term of $O(\varepsilon)$  of $d(\rho).$
 From (\ref{forcecomponentorder1}), keeping also into account that $\mathcal{K}_2$ maps constants into constants, we obtain
 \begin{eqnarray}\label{effe}
   \langle\frac{f}{\omega}\rangle \hskip12.5cm \\\nonumber\hskip3cm =\omega_0r_0\varepsilon c(\varepsilon)(
   \langle \cos(\theta)(\mathcal{K}_1^r
+\mathcal{K}_2\sigma-w(\theta)\mathcal{K}_2)-\sin\theta \mathcal{K}_1\sigma \rangle -2\pi\mathcal{K}_3)+o(\varepsilon)^{0,\gamma^\prime}.\hskip1cm\end{eqnarray}
The lemma follows from this (\ref{forcecomponentorder1}) and
\begin{eqnarray}
  \frac{1}{\omega(R+r\cos(\cdot))^3}\frac{1}{\langle\frac{1}{\omega(R+r\cos(\cdot))^3}\rangle}&=& \frac{1}{2\pi}+O(\varepsilon)^{0,\gamma}.
\end{eqnarray}
If   $\phi:\RR\rightarrow\RR$ is a $2\pi-$periodic function we let $\frac{1}{2}\phi_0+\sum_1^\infty(\phi_n^1\cos(n\cdot)+\phi_n^2\sin(n\cdot))$ be the Fourier series of $\phi.$ In particular for the unknowns $\rho$ and $w$ we have:
\begin{eqnarray}\label{structure-ro-w}
\left\{\begin{array}{l}
\rho=\sum_2^\infty(\rho_n^1\cos(n\cdot)+\rho_n^2\sin(n\cdot)),\\
w=\sum_1^\infty(w_n^1\cos(n\cdot)+w_n^2\sin(n\cdot)),
\end{array}\right.
\end{eqnarray}
where we have used the fact that $\rho$ and $w$ have zero average and (\ref{fixcenter}).

\begin{lem}\label{fourierk1r} The function $\mathcal{K}_1^r$ can be represented in the form
\begin{eqnarray}\label{coefficientk1r}
  \mathcal{K}_1^r(\theta) &=& -4\pi\sum_{n=2}^\infty n(\rho_n^1\cos{n\theta}+ \rho_n^2\sin{n\theta})+o(\varepsilon)^{0,\gamma^\prime}.
\end{eqnarray}
\end{lem}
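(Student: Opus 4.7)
The plan is to integrate out the $\beta$ variable first, recognize the resulting principal value integral in $\alpha$ as a classical singular operator, and then compute its Fourier coefficients via the Fej\'er kernel identity. Concretely, setting $b:=\varepsilon^2 z(\alpha-\theta)$, I would study
$$J(\alpha-\theta,\varepsilon):=\int_{-\pi}^{\pi}\frac{d\beta}{(z(\beta)+b)^{3/2}}$$
and, via the substitution $\beta=\varepsilon\eta$ together with the expansion $z(\varepsilon\eta)/\varepsilon^2=\eta^2+O(\varepsilon^2\eta^4)$ valid for $|\varepsilon\eta|\leq\pi$, show that
$$J(\alpha-\theta,\varepsilon)=\frac{2}{\varepsilon^2 z(\alpha-\theta)}+R(\alpha-\theta,\varepsilon),$$
where $|R|\leq C\bigl(1+\log\frac{1}{\varepsilon^2 z(\alpha-\theta)}\bigr)$. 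The leading term comes from the elementary identity $\int_{-\infty}^{\infty}(\eta^2+z)^{-3/2}d\eta=2/z$, while the logarithmic bound on $R$ is obtained by comparing $(z(\varepsilon\eta)/\varepsilon^2+z)^{-3/2}$ with $(\eta^2+z)^{-3/2}$ and by estimating the truncation outside $|\eta|\leq\pi/\varepsilon$.

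Substituting back into the defining integral of $\mathcal{K}_1^r(\theta)$ (applied to $\sigma\equiv 1$) and using $z(\alpha-\theta)=4\sin^2((\alpha-\theta)/2)$ yields
$$\mathcal{K}_1^r(\theta)=\int_{-\pi}^{\pi}\frac{\rho(\alpha)-\rho(\theta)}{\sin^2((\alpha-\theta)/2)}\,d\alpha+2\varepsilon^2\int_{-\pi}^{\pi}(\rho(\alpha)-\rho(\theta))R(\alpha-\theta,\varepsilon)\,d\alpha.$$
For the leading principal value integral I would plug in a single harmonic, change variables to $\xi=\alpha-\theta$, and apply the addition formulas; the sine component drops by symmetry, leaving $\int_{-\pi}^{\pi}(\cos(n\xi)-1)/\sin^2(\xi/2)\,d\xi=-4\pi n$, a consequence of the classical Fej\'er identity $\sin^2(n\xi/2)/\sin^2(\xi/2)=n+2\sum_{k=1}^{n-1}(n-k)\cos(k\xi)$. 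Summing over the Fourier decomposition (\ref{structure-ro-w}) of $\rho$, and noting that the zero average condition together with (\ref{fixcenter}) forces $\rho_0=\rho_1^1=\rho_1^2=0$, I obtain the claimed leading contribution $-4\pi\sum_{n\geq 2}n(\rho_n^1\cos n\theta+\rho_n^2\sin n\theta)$.

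What remains is to show that the remainder $2\varepsilon^2\int(\rho(\alpha)-\rho(\theta))R(\alpha-\theta,\varepsilon)\,d\alpha$ is $o(\varepsilon)^{0,\gamma^\prime}$. The sup-norm half is immediate: combining $|\rho(\alpha)-\rho(\theta)|\leq\|\rho\|_{C^{1,\gamma}}|\alpha-\theta|$ with the logarithmic bound on $R$ gives an integral of size $O(\varepsilon^2\log(1/\varepsilon))=o(\varepsilon)$. I expect the H\"older seminorm estimate to be the main obstacle, since it requires a delicate comparison of the remainder at two different angles $\theta_1,\theta_2$. For this I would mimic the even/odd decomposition in $\alpha-\theta$ and apply the estimates (i)--(v) used in the proof of Lemma \ref{mapdefinedbyk}, again using the substitution $\beta=\varepsilon\eta$ to handle the $\beta$-integration. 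The extra prefactor $\varepsilon^2$ together with the strong vanishing $|\rho(\alpha)-\rho(\theta)|\leq C|\alpha-\theta|$ inherited from $\rho\in C^{1,\gamma}$ should leave enough room to absorb the mild logarithmic growth of $R$ and the $C_{\gamma^\prime}$ constants, closing the estimate at level $o(\varepsilon)$ in the $C^{0,\gamma^\prime}$ norm.
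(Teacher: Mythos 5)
Your argument is correct and is essentially the paper's own proof: there too the $\beta$-variable is integrated out (after the rescaling $z(\beta)=\varepsilon^2\xi^2$, extending to $\xi\in(0,\infty)$ and using $\int_0^\infty(\xi^2+a)^{-3/2}d\xi=1/a$), which reduces $\mathcal{K}_1^r$ modulo $o(\varepsilon)^{0,\gamma^\prime}$ to the same one-dimensional singular integral $8\int_0^\pi\rho^+(\alpha+\theta,\theta)/z(\alpha)\,d\alpha$ you obtain, and the Fourier coefficients come from the same identity $\int_0^\pi\frac{1-\cos n\alpha}{1-\cos\alpha}\,d\alpha=n\pi$, with $\rho_0=\rho_1^j=0$ killing the low modes. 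The only differences are bookkeeping: the paper controls the error by splitting into a tail term and a Jacobian-correction term ($I^2,I^3$) and works with the symmetrized $\rho^+$ (so all integrals are absolutely convergent), whereas you expand the exact $\beta$-integral with a pointwise log-weighted remainder $R$ and keep a principal value; note also that the $C^{0,\gamma^\prime}$ estimate you flag as the main obstacle is in fact immediate in your formulation, since $R$ depends only on $\alpha-\theta$, so a shift of variables together with $\rho\in C^{1,\gamma}$ and the integrability of $|\xi|\log\frac{1}{\varepsilon^2z(\xi)}$ gives the H\"older bound $O(\varepsilon^2\log\frac{1}{\varepsilon})$, and linearity in $\rho$ gives the Lipschitz half of the $o(\varepsilon)^{0,\gamma^\prime}$ definition.
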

\begin{proof} From (\ref{kernels}) and the discussion in Lemma \ref{mapdefinedbyk} it follows
\begin{eqnarray}
 \mathcal{K}_1^r = \mathcal{K}_1^{r,+}&=&\int_{-\pi}^{\pi}\int_{-\pi}^{\pi}\frac{2\varepsilon^2\rho^+(\alpha+\theta,\theta)}{(z(\beta)+\varepsilon^2 z(\alpha))^\frac{3}{2}}d\alpha d\beta,
\end{eqnarray}
where $\rho^+(\alpha+\theta,\theta)=\frac{1}{2}(\rho(\alpha+\theta)+\rho(-\alpha+\theta)-2\rho(\theta)).$
With the change of variable $z(\beta)=\varepsilon^2\xi^2$, that implies  $\sqrt{1-(\frac{\varepsilon\xi}{2})^2}d\beta=\varepsilon d\xi,$   $\mathcal{K}_1^r$ takes the form
\begin{eqnarray}
  \mathcal{K}_1^r &=& 4\int_0^\pi\int_0^{\frac{2}{\varepsilon}}\frac{2\varepsilon^2\rho^+(\alpha+\theta,\theta)}{(\xi^2+ z(\alpha))^\frac{3}{2}} \frac{d\alpha d\xi}{\sqrt{1-(\frac{\varepsilon\xi}{2})^2}} \\\nonumber
    &=&  8\int_0^\pi\int_0^\infty\frac{\rho^+(\alpha+\theta,\theta)}{(\xi^2+ z(\alpha))^\frac{3}{2}} d\alpha d\xi    -8\int_0^\pi\int_\frac{2}{\varepsilon}^\infty\frac{\rho^+(\alpha+\theta,\theta)}{(\xi^2+ z(\alpha))^\frac{3}{2}} d\alpha d\xi\\\nonumber
    && +8\int_0^\pi\int_0^{\frac{2}{\varepsilon}}\frac{\rho^+(\alpha+\theta,\theta)}{(\xi^2+ z(\alpha))^\frac{3}{2}} \frac{1-\sqrt{1-(\frac{\varepsilon\xi}{2})^2}}{\sqrt{1-(\frac{\varepsilon\xi}{2})^2}}d\alpha d\xi=I^1+I^2+I^3.
\end{eqnarray}
From (ii) and (v) in the proof of Lemma \ref{mapdefinedbyk} we have
\begin{eqnarray}\label{stimaI2}
  \vert I^2\vert &\leq& 8\left\|\rho\right\|_{C^{1,\gamma}}\int_0^\pi\int_\frac{2}{\varepsilon}^\infty\frac{\vert \alpha\vert^{1+\gamma}}{\xi^{-3}}d\xi\leq  C\left\|\rho\right\|_{C^{1,\gamma}}\varepsilon^2,\\\nonumber
  \frac{\vert I^2(\theta_1)-I^2(\theta_2)\vert}{\vert \theta_1-\theta_2\vert^{\nu^\prime}} &\leq& 8\left\|\rho\right\|_{C^{1,\gamma}}\int_0^\pi\int_\frac{2}{\varepsilon}^\infty\frac{\vert \alpha\vert^{(1+\gamma)(1-\nu^\prime)}}{\xi^{-3}}d\xi\leq C\left\|\rho\right\|_{C^{1,\gamma}}\varepsilon^2\\\
  \Rightarrow I^2 &=& o(\varepsilon)^{0,\gamma^\prime},
 \end{eqnarray}
 where to conclude that $I^2$ is $o(\varepsilon)^{0,\gamma^\prime}$ we have used (\ref{stimaI2})$_1$ and (\ref{stimaI2})$_2$ that show $I^2$ is H$\ddot{\rm o}$lder continuous and  the linear dependence of  $I^2$ from $\rho$.
To estimate $I^3$ we set $q(s)=\frac{1-\sqrt{1-(\frac{s}{2})^2}}{\sqrt{1-(\frac{s}{2})^2}}$ and observe that $q(s)\leq C s^2,\hskip.3cm s\in(0,1);\hskip.4cm q(s)\leq C\frac{1}{\sqrt{1-\frac{s}{2}}}, \hskip.3cm s\in(1,2).$ Then we have
\begin{eqnarray}\label{stimaI3}
\hskip1.7cm  \vert I^3\vert &\leq& C\left\|\rho\right\|_{C^{1,\gamma}}\int_0^\pi\int_0^{\frac{2}{\varepsilon}}\frac{\vert \alpha\vert^{1+\gamma}}{(\xi^2+ z(\alpha))^\frac{3}{2}} q(\varepsilon\xi)d\alpha d\xi \\\nonumber
    & &\leq C\left\|\rho\right\|_{C^{1,\gamma}}\int_0^\pi\int_0^{\frac{1}{\varepsilon}}\frac{\vert \alpha\vert^{1+\gamma}}{(\xi^2+ z(\alpha))^\frac{3}{2}} (\varepsilon\xi)^2d\alpha d\xi\\\nonumber
    &&\hskip4cm +C\left\|\rho\right\|_{C^{1,\gamma}}
    \int_0^\pi\int_{\frac{1}{\varepsilon}}^{\frac{2}{\varepsilon}}\frac{\vert \alpha\vert^{1+\gamma}}{(\xi^2+ z(\alpha))^\frac{3}{2}} \frac{1}{\sqrt{1-\frac{\varepsilon\xi}{2}}}d\alpha d\xi
\end{eqnarray} and
\begin{eqnarray}\label{stimaI31}
  \int_0^\pi\int_0^{\frac{1}{\varepsilon}}\frac{\vert \alpha\vert^{1+\gamma}}{(\xi^2+ z(\alpha))^\frac{3}{2}} (\varepsilon\xi)^2d\alpha d\xi&\leq& \varepsilon^2\int_0^\pi\int_0^1\frac{\vert \alpha\vert^{1+\gamma})d\alpha d\xi}{(\xi^2+ z(\alpha))^\frac{3}{2}}\\\nonumber &&\hskip3cm  +C\varepsilon^2\int_1^{\frac{1}{\varepsilon}}\frac{ 1}{\xi } d\xi\hskip2cm\\\nonumber
    &\leq&  C\varepsilon^2(1+\log{\frac{1}{\varepsilon}}),\\\nonumber
    \int_0^\pi\int_{\frac{1}{\varepsilon}}^{\frac{2}{\varepsilon}}\frac{\vert \alpha\vert^{1+\gamma}}{(\xi^2+ z(\alpha))^\frac{3}{2}} \frac{1}{\sqrt{1-\frac{\varepsilon\xi}{2}}}d\alpha d\xi&\leq&C\int_{\frac{1}{\varepsilon}}^{\frac{2}{\varepsilon}}  \frac{1}{\xi^3\sqrt{1-\frac{\varepsilon\xi}{2}}}d\xi\\\nonumber &=& C\varepsilon^2\int_1^2  \frac{1}{s^3\sqrt{1-\frac{s}{2}}}d s.
\end{eqnarray}
Proceeding as in (\ref{stimaI3}) and (\ref{stimaI31}) we also get that $\vert I^3(\theta_1)-I^3(\theta_2)\vert\leq C\varepsilon^2 \log\frac{1}{\varepsilon}\vert \theta_1-\theta_2\vert^{\nu^\prime}.$ From these estimates and the linearity of $I^3$ in $\rho$ we conclude that, as $I^2$, also $I^3$ is $o(\varepsilon)^{0,\gamma^\prime}$. From this and (\ref{stimaI2}), (\ref{stimaI3}) and (\ref{stimaI31}) we conclude that

\begin{eqnarray}\label{similar}
  \mathcal{K}_1^r &=& 8\int_0^\pi\int_0^\infty\frac{\rho^+(\alpha+\theta,\theta)}{(\xi^2+ z(\alpha))^\frac{3}{2}} d\alpha d\xi+ o(\varepsilon)^{0,\gamma^\prime} = I^1+ o(\varepsilon)^{0,\gamma^\prime}.
\end{eqnarray}
To compute the Fourier expansion of $I^1=I^1(\theta)$ we begin by observing that by Fubini theorem and the identity  $\int_0^\infty\frac{a}{(\xi^2+ a)^\frac{3}{2}}d\xi=1,$ valid for $a>0,$ we have
\begin{eqnarray}
  I^1 &=& 8\int_0^\pi \frac{\rho^+(\alpha+\theta,\theta)}{z(\alpha)} d\alpha.
\end{eqnarray}
Therefore the Fourier coefficients $(I^1)_n^i, i=1,2$ of $I^1$ are given by
\begin{eqnarray}\label{coefficientsexpression}
  (I^1)_n^1 &=& \frac{8}{\pi}\int_{-\pi}^\pi\cos{n\theta}(\int_0^\pi \frac{\rho^+(\alpha+\theta,\theta)}{z(\alpha)} d\alpha)d\theta,\hskip.3cm n\geq 0,\\\nonumber
  (I^1)_n^2 &=& \frac{8}{\pi}\int_{-\pi}^\pi\sin{n\theta}(\int_0^\pi \frac{\rho^+(\alpha+\theta,\theta)}{z(\alpha)} d\alpha)d\theta,\hskip.3cm n\geq 1.
\end{eqnarray}
Since we have $\vert\frac{\rho^+(\alpha+\theta,\theta)}{z(\alpha)}\vert\leq \left\|\rho\right\|_{C^{1,\gamma}}\vert\alpha\vert^{\gamma-1}$ the functions $\cos{n\theta}\frac{\rho^+(\alpha+\theta,\theta)}{z(\alpha)}, \sin{n\theta}\frac{\rho^+(\alpha+\theta,\theta)}{z(\alpha)}$ are integrable in $(-\pi,\pi)\times(0,\pi)$ and therefore by Fubini theorem we can interchange the order of integration in (\ref{coefficientsexpression}). From this and
\begin{eqnarray}
  \rho^+(\alpha+\theta,\theta) &=& \sum_{n=2}^\infty(\cos{n\alpha}-1)(\rho_n^1\cos{n\theta}+ \rho_n^2\sin{n\theta})
\end{eqnarray}
we obtain
\begin{eqnarray}
  (I^1)_0^1 &=& 0,\\\nonumber (I^1)_n^i &=& -4\rho_n^i\int_0^\pi\frac{1-\cos{n\alpha}}{1-\cos{\alpha}}d\alpha = -4n\pi\rho_n^i,\hskip.3cm  n\geq 1,\hskip.3cm i=1,2.
\end{eqnarray}
Since $\rho_1^j=0,\,j=1,2$ we also have $(I^1)_1^j=0,\,j=1,2$.
This concludes the proof.
\end{proof}

\begin{lem}\label{fourierk2}Set
\begin{eqnarray}
  \tilde\phi=-\frac{3}{2}
        (\frac{z(\beta)}{D_0^2(\beta,\alpha-\theta)}\phi(\alpha); & &\hat\phi= -\frac{3}{2}\frac{\varepsilon^2 z(\alpha-\theta)}{D_0^2(\beta,\alpha-\theta)}\phi(\alpha),
\end{eqnarray}
and define $\tilde{{\mathcal K}}_j,\;\hat{{\mathcal K}}_j$ by $\tilde{{\mathcal K}}_j\phi:={\mathcal K}\tilde{\phi},\;\hat{{\mathcal K}}_j\phi:={\mathcal K}\hat{\phi}$
Then:
 \begin{eqnarray}\label{coefficientk3}
 (\mathcal{K}_1\phi)(\theta)&=&
 4\pi\sum_1^\infty(\phi_n^2\cos{n\theta}-\phi_n^1\sin{n\theta})+o(\varepsilon)^{0,\gamma^\prime}\hskip.3cm \\\nonumber
 (\tilde{{\mathcal K}}_1\phi)(\theta)&=&-2\pi\sum_1^\infty(\phi_n^2\cos{n\theta}-\phi_n^1\sin{n\theta})\hskip.3cm +o(\varepsilon)^{0,\gamma^\prime},\\\nonumber
 (\hat{{\mathcal K}}_1\phi)(\theta)&=&-4\pi\sum_1^\infty(\phi_n^2\cos{n\theta}-\phi_n^1\sin{n\theta})\hskip.3cm +o(\varepsilon)^{0,\gamma^\prime}.
 \end{eqnarray}
 \begin{eqnarray}\label{coefficientk2}
 (\mathcal{K}_2\phi)(\theta)&=&-2\pi\phi_0+o(\varepsilon)^{0,\gamma^\prime}\hskip.3cm \\\nonumber
 (\tilde{{\mathcal K}}_2\phi)(\theta)&=&\pi\phi_0\hskip.3cm +o(\varepsilon)^{0,\gamma^\prime},\\\nonumber
 (\hat{{\mathcal K}}_2\phi)(\theta)&=& 2\pi\phi_0\hskip.3cm +o(\varepsilon)^{0,\gamma^\prime}.
 \end{eqnarray}
 \end{lem}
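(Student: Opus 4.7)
\medskip
\noindent
\textbf{Proof plan.} My plan is to treat all six operators by one common template that parallels the argument already carried out in Lemma \ref{fourierk1r} for $\mathcal{K}_1^r$. For $\mathcal{K}_j\phi$ I start by shifting $\alpha\mapsto\alpha+\theta$, so that each kernel depends on $\alpha$ only through $\sin\alpha$ (in the $K_1$ case) or through $z(\alpha)$ (in the $K_2$ case), and I split $\phi(\alpha+\theta)=\phi^+(\alpha+\theta,\theta)+\phi^-(\alpha+\theta,\theta)$. Since $K_1$ is odd in $\alpha$ and $K_2$ is even in $\alpha$, and since the extra factors $z(\beta)/D_0^2$ or $\varepsilon^2z(\alpha-\theta)/D_0^2$ in the definitions of $\tilde{\mathcal K}_j,\hat{\mathcal K}_j$ are even in $\alpha$, in each case only $\phi^-$ (for $K_1$) or $\phi^+$ (for $K_2$) survives and the domain of integration reduces to $(0,\pi)^2$ up to a factor of $4$.

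Next I perform the substitution $z(\beta)=\varepsilon^2\xi^2$, which gives $d\beta=\varepsilon/\sqrt{1-(\varepsilon\xi/2)^2}\,d\xi$ and converts the $\beta$-integral to an integral over $\xi\in(0,2/\varepsilon)$. Exactly as in the proof of Lemma \ref{fourierk1r}, I decompose the result as $I^1+I^2+I^3$, where $I^1$ is the limiting integral over $(0,\pi)\times(0,\infty)$ with $\sqrt{1-(\varepsilon\xi/2)^2}$ replaced by $1$, $I^2$ is the truncation error for $\xi>2/\varepsilon$, and $I^3$ is the Jacobian-correction error. Using the Hölder estimates (i)--(v) established there on $\phi^\pm$, the same kind of bounds show $I^2,I^3=o(\varepsilon)^{0,\gamma^\prime}$; the only novelty for the $\tilde{\mathcal K}_j,\hat{\mathcal K}_j$ variants is that the denominator is $(\xi^2+z(\alpha))^{5/2}$ instead of $(\xi^2+z(\alpha))^{3/2}$, but this extra power is exactly compensated by the additional factor $\xi^2$ (from $z(\beta)=\varepsilon^2\xi^2$) or $z(\alpha)$ in the numerator, so the relevant $\xi$-tail and Jacobian estimates scale exactly as before.

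Then I compute $I^1$ explicitly using the three elementary identities
\begin{equation*}
\int_0^\infty\!\frac{d\xi}{(\xi^2+a)^{3/2}}=\frac{1}{a},\qquad
\int_0^\infty\!\frac{\xi^2\,d\xi}{(\xi^2+a)^{5/2}}=\frac{1}{3a},\qquad
\int_0^\infty\!\frac{d\xi}{(\xi^2+a)^{5/2}}=\frac{2}{3a^2},
\end{equation*}
which in every one of the six cases lead to a clean cancellation of the $z(\alpha)$-powers. For the $K_2$ variants I am left with a constant multiple of $\int_0^\pi \phi^+(\alpha+\theta,\theta)\,d\alpha$, and since $\phi^+(\alpha+\theta,\theta)=\frac{\phi_0}{2}+\sum_{n\ge 1}\cos(n\alpha)(\phi_n^1\cos n\theta+\phi_n^2\sin n\theta)$ while $\int_0^\pi\cos(n\alpha)\,d\alpha=0$ for $n\ge 1$, only the zeroth Fourier coefficient $\phi_0$ survives. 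Bookkeeping of the constants yields the coefficients $-2\pi,\pi,2\pi$ in (\ref{coefficientk2}).

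For the $K_1$ variants the $\xi$-integration leaves $\int_0^\pi\frac{\sin\alpha}{z(\alpha)}\phi^-(\alpha+\theta,\theta)\,d\alpha=\frac{1}{2}\int_0^\pi\cot(\alpha/2)\,\phi^-(\alpha+\theta,\theta)\,d\alpha$, and I expand $\phi^-(\alpha+\theta,\theta)=\sum_{n\ge 1}\sin(n\alpha)(\phi_n^2\cos n\theta-\phi_n^1\sin n\theta)$. The key scalar identity I need is
\begin{equation*}
\int_0^\pi\cot(\alpha/2)\sin(n\alpha)\,d\alpha=\pi,\qquad n\ge 1,
\end{equation*}
which I verify by direct computation for $n=1,2$ (using $\cot(\alpha/2)\sin\alpha=1+\cos\alpha$ and $\sin 2\alpha=2\sin\alpha\cos\alpha$) and extend inductively through the recursion $\cot(\alpha/2)\sin(n\alpha)-\cot(\alpha/2)\sin((n-2)\alpha)=2\cos((n-1)\alpha)(1+\cos\alpha)$, whose integral over $(0,\pi)$ vanishes for $n\ge 3$. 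Summing the resulting Fourier-mode contributions produces the coefficients $4\pi,-2\pi,-4\pi$ of (\ref{coefficientk3}). The only delicate point in the whole argument is the uniform $o(\varepsilon)^{0,\gamma^\prime}$ control of the error pieces $I^2,I^3$ for the sharper $5/2$-power kernels, but this is essentially a bookkeeping exercise once one follows the template of Lemma \ref{fourierk1r}.
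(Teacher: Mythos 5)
Your proposal follows essentially the same route as the paper's proof: the shift $\alpha\mapsto\alpha+\theta$ with the even/odd splitting, the substitution $z(\beta)=\varepsilon^2\xi^2$, the $I^1+I^2+I^3$ decomposition with the same tail and Jacobian-correction error estimates carried over from Lemma \ref{fourierk1r}, the elementary $\xi$-integrals (equivalent to (\ref{identities-int})), and the same key $\alpha$-identity, since $\int_{-\pi}^{\pi}\frac{\sin\alpha\,\sin n\alpha}{2(1-\cos\alpha)}\,d\alpha=\pi$ is exactly your cotangent identity. The only cosmetic difference is that you expand $\phi^{\pm}$ in Fourier series and integrate termwise against the $\cot(\alpha/2)$-type weight, whereas the paper computes the Fourier coefficients of the limiting integral directly via Fubini (which also sidesteps justifying that interchange for merely $C^{0,\gamma}$ data); your constants agree with (\ref{coefficientk3}) and (\ref{coefficientk2}).
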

\begin{proof} The same arguments used in the proof of Lemma \ref{fourierk1r} to establish (\ref{similar}) lead to
\begin{eqnarray}\label{fourierk1}
 \mathcal{K}_1\phi &=& 4\int_{-\pi}^\pi\int_0^\infty
\frac{\sin\alpha\phi^-(\alpha+\theta)}{(\xi^2+ z(\alpha))^\frac{3}{2}}d\alpha d\xi  +o(\varepsilon)^{0,\gamma^\prime}\\\nonumber
 \mathcal{K}_1\tilde\phi &=& -6\int_{-\pi}^\pi\int_0^\infty
\frac{\xi^2\sin\alpha\phi^-(\alpha+\theta)}{(\xi^2+ z(\alpha))^\frac{5}{2}}d\alpha d\xi  +o(\varepsilon)^{0,\gamma^\prime}\\\nonumber
 \mathcal{K}_1\hat\phi &=& -6\int_{-\pi}^\pi\int_0^\infty
\frac{z(\alpha)\sin\alpha\phi^-(\alpha+\theta)}{(\xi^2+ z(\alpha))^\frac{5}{2}}d\alpha d\xi  +o(\varepsilon)^{0,\gamma^\prime}.
\end{eqnarray}
We sketch the computation for the function $\mathcal{K}_1\tilde\phi.$ We set $z(\beta)=\varepsilon^2\xi^2$ as in the proof of Lemma \ref{fourierk1r} and rewrite $\mathcal{K}_1\tilde\phi$ in the form

\begin{eqnarray}
  \hskip1.7cm\mathcal{K}_1\tilde\phi &=& -6\int_{-\pi}^\pi\int_0^{\frac{2}{\varepsilon}}\frac{\xi^2\sin\alpha\phi^-(\alpha+\theta)}{(\xi^2+ z(\alpha))^\frac{5}{2}} \frac{d\alpha d\xi}{\sqrt{1-(\frac{\varepsilon\xi}{2})^2}} \\\nonumber
    &=& -6\int_{-\pi}^\pi\int_0^\infty\frac{\xi^2\sin\alpha\phi^-(\alpha+\theta)}{(\xi^2+ z(\alpha))^\frac{5}{2}} d\alpha d\xi    +6\int_{-\pi}^\pi\int_\frac{2}{\varepsilon}^\infty\frac{\xi^2\sin\alpha\phi^-(\alpha+\theta)}{(\xi^2+ z(\alpha))^\frac{5}{2}} d\alpha d\xi\\\nonumber
    && -6\int_0^\pi\int_0^{\frac{2}{\varepsilon}}\frac{\xi^2\sin\alpha\phi^-(\alpha+\theta)}{(\xi^2+ z(\alpha))^\frac{5}{2}} q(\varepsilon\xi)d\alpha d\xi=J^1+J^2+J^3,
\end{eqnarray}
where as before $q(s)=\frac{1-\sqrt{1-(\frac{s}{2})^2}}{\sqrt{1-(\frac{s}{2})^2}}.$  We can then estimate $J^2$ and $J^3$  as we have estimated $I^2$ and $I^3$ in Lemma \ref{fourierk1r}.

The same procedure yields
\begin{eqnarray}\label{fourierk22}
 \mathcal{K}_2\phi &=& -2\int_{-\pi}^\pi\int_0^\infty
\frac{z(\alpha)\phi(\alpha+\theta)}{(\xi^2+ z(\alpha))^\frac{3}{2}}d\alpha d\xi  +o(\varepsilon)^{0,\gamma^\prime},\hskip.8cm\\\nonumber
 \mathcal{K}_2\tilde\phi &=& 3\int_{-\pi}^\pi\int_0^\infty
\frac{\xi^2z(\alpha)\phi(\alpha+\theta)}{(\xi^2+ z(\alpha))^\frac{5}{2}}d\alpha d\xi  +o(\varepsilon)^{0,\gamma^\prime}, \\\nonumber
 \mathcal{K}_2\hat\phi &=& 3\int_{-\pi}^\pi\int_0^\infty
\frac{z(\alpha)^2\phi(\alpha+\theta)}{(\xi^2+ z(\alpha))^\frac{5}{2}}d\alpha d\xi  +o(\varepsilon)^{0,\gamma^\prime}.
\end{eqnarray}
The expressions (\ref{coefficientk2}) are a straightforward consequence of (\ref{fourierk22}) and the identities

\begin{eqnarray}\label{identities-int}
\hskip1.7cm\int_0^\infty\frac{a}{(\xi^2+a)^\frac{3}{2}}d\xi=1,
\int_0^\infty\frac{\xi^2a}{(\xi^2+a)^\frac{5}{2}}d\xi=\frac{1}{3},
\int_0^\infty\frac{a^2}{(\xi^2+a)^\frac{5}{2}}d\xi=\frac{2}{3},\hskip.5cm a>0.\hskip.5cm
\end{eqnarray}
From (\ref{fourierk1}) and (\ref{identities-int}) we obtain
\begin{eqnarray}\label{mathcalK1}
\mathcal{K}_1\phi&=&4I+o(\varepsilon)^{0,\gamma^\prime},\\\nonumber  \mathcal{K}_1\tilde\phi&=&-2I+o(\varepsilon)^{0,\gamma^\prime},\\\nonumber \mathcal{K}_1\hat\phi&=&-4I+o(\varepsilon)^{0,\gamma^\prime},
\end{eqnarray}
with
\begin{eqnarray*}
  I &=& \int_{-\pi}^\pi\frac{\sin\alpha}{z(\alpha)}\phi^-(\alpha+\theta)d\alpha,
\end{eqnarray*}
To compute the Fourier coefficients of $I=I(\theta)$ we observe that Fubini theorem implies
\begin{eqnarray}\label{fourierofI}
  I_n^1 &=&\frac{1}{\pi} \int_{-\pi}^\pi\frac{\sin\alpha}{z(\alpha)}
  (\int_{-\pi}^\pi\phi^-(\alpha+\theta)\cos{n\theta}d\theta)d\alpha,\\\nonumber
  I_n^2 &=&\frac{1}{\pi}\int_{-\pi}^\pi\frac{\sin\alpha}{z(\alpha)}
  (\int_{-\pi}^\pi\phi^-(\alpha+\theta)\sin{n\theta}d\theta)d\alpha.
\end{eqnarray}
From (\ref{fourierofI})  and $\phi^-(\alpha+\theta)=\sum_{n=1}^\infty\sin{n\alpha}(-\phi_n^1\sin{n\theta}+\phi_n^2\cos{n\theta})$
it follows
\begin{eqnarray*}
  I_n^1 &=&  \phi_n^2\int_{-\pi}^\pi\frac{\sin\alpha\sin{n\alpha}}{2(1-\cos\alpha)}d\alpha=\pi\phi_n^2,\\\nonumber
  I_n^2 &=& -\phi_n^1\int_{-\pi}^\pi\frac{\sin\alpha\sin{n\alpha}}{2(1-\cos\alpha)}d\alpha=-\pi\phi_n^1.
\end{eqnarray*}
The expressions (\ref{coefficientk3}) follow from this and (\ref{mathcalK1}).
The proof is concluded.
\end{proof}

We are now in the position of deriving explicit expressions of ${\mathcal F}^r$ and ${\mathcal F}^\theta$ in term of the Fourier coefficients of $\rho$ and $w$.
\begin{prop}\label{firststep}
We have
\begin{eqnarray}\label{calf-expressions}
\hskip.5cm\frac{{\mathcal F}^r}{\omega_0r_0} &=&-4\pi c(\varepsilon)\\\nonumber
 &&\hskip.5cm +4\pi\varepsilon c(\varepsilon)[\frac{1}{2}\cos{\theta}+\sum_2^\infty (1-n)(\rho_n^1\cos{n\theta}+\rho_n^2\sin{n\theta})+w(\theta)]+\varepsilon{\mathcal N}_F^r,\\\nonumber
\frac{{\mathcal F}^\theta}{\omega_0r_0} &=& 4\pi\varepsilon c(\varepsilon)[\sum_1^\infty(-w_n^2\cos{n\theta}+w_n^1\sin{n\theta})+\frac{1}{2}\sin{\theta}]+\varepsilon{\mathcal N}_F^\theta.
\end{eqnarray}
where
${\mathcal N}_F^r={\mathcal N}_F^r(\rho,w;\theta,\varepsilon),\;{\mathcal N_F}^\theta={\mathcal N}_F^\theta(\rho,w;\theta,\varepsilon)$ are $o(\varepsilon^0)^{0,\gamma^\prime}$.
\end{prop}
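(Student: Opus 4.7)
The plan is to substitute the expressions (\ref{rhslinearpart}) of Lemma \ref{linearpart} into ${\mathcal F}^r$ and ${\mathcal F}^\theta$ and evaluate each operator appearing there via the Fourier identities in Lemmas \ref{fourierk1r} and \ref{fourierk2}. All the work reduces to decomposing $\sigma$ from (\ref{sigma}) into a finite sum of pieces to which these identities apply directly and then tracking which Fourier modes survive under each of $\mathcal{K}_1$, $\mathcal{K}_1^r$, $\mathcal{K}_2$, and the twisted operators $\tilde{\mathcal{K}}_j$, $\hat{\mathcal{K}}_j$.

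First I would pull the $\theta$-only factors $\cos\theta$ and $\rho(\theta)$ out of the $\alpha,\beta$-integration in (\ref{sigma}) to write, in the notation of Lemma \ref{fourierk2}, $\sigma = \rho(\alpha) - w(\alpha) + \widetilde{\cos\alpha} + \cos\theta\,\widetilde{1} + \widehat{\rho} + \rho(\theta)\,\widehat{1}$. Applying $\mathcal{K}_2$ term by term and using that $\mathcal{K}_2$, $\tilde{\mathcal{K}}_2$, $\hat{\mathcal{K}}_2$ retain at leading order only the zeroth Fourier coefficient of the argument, the contributions from $\rho(\alpha)$, $w(\alpha)$ and $\cos\alpha$ drop out (zero mean), leaving $\mathcal{K}_2\sigma = 2\pi\cos\theta + 4\pi\rho(\theta) + o(\varepsilon)^{0,\gamma'}$. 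Similarly $\mathcal{K}_1$, $\tilde{\mathcal{K}}_1$, $\hat{\mathcal{K}}_1$ retain only the first-mode content: the pieces $\mathcal{K}_1\rho$ and $\hat{\mathcal{K}}_1\rho$ cancel (coefficients $4\pi$ versus $-4\pi$), the term $\tilde{\mathcal{K}}_1(\cos\alpha) = 2\pi\sin\theta$ survives, and the result is $\mathcal{K}_1\sigma = 2\pi\sin\theta - 4\pi\sum_{n\ge 1}(w_n^2\cos n\theta - w_n^1\sin n\theta) + o(\varepsilon)^{0,\gamma'}$. The isolated operator $\mathcal{K}_1^r$ is handed to us by Lemma \ref{fourierk1r}, and the constant $\mathcal{K}_2(1) = -4\pi + o(\varepsilon)^{0,\gamma'}$ produces, after multiplication by $\varepsilon c(\varepsilon)\cdot\varepsilon^{-1}$, the leading $-4\pi c(\varepsilon)$ in the target.

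Next I would verify that the centrifugal correction $\langle\cos\theta(\mathcal{K}_1^r + \mathcal{K}_2\sigma - w\mathcal{K}_2) - \sin\theta\,\mathcal{K}_1\sigma\rangle$ is $o(1)^{0,\gamma'}$ and therefore absorbable into $\varepsilon\mathcal{N}_F^{r,\theta}$. At leading order this bracket decomposes into six trigonometric integrals: the $\cos\theta\,\mathcal{K}_1^r$ piece vanishes by orthogonality (all its modes have $n\ge 2$); the diagonal contributions $2\pi\langle\cos^2\theta\rangle$ and $-2\pi\langle\sin^2\theta\rangle$ cancel; the term $4\pi\langle\cos\theta\,\rho(\theta)\rangle = 4\pi^2\rho_1^1$ vanishes by the pinning condition (\ref{fixcenter}); and the first-mode contributions $4\pi\langle\cos\theta\,w(\theta)\rangle = 4\pi^2 w_1^1$ (from $-w\mathcal{K}_2$) and $-4\pi^2 w_1^1$ (from the $-4\pi(-w_1^1\sin\theta)$ term inside $-\sin\theta\,\mathcal{K}_1\sigma$) cancel.

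Finally, combining everything, $\frac{{\mathcal F}^r}{\omega_0 r_0} = -4\pi c(\varepsilon) + 4\pi\varepsilon c(\varepsilon)\bigl[\tfrac{1}{2}\cos\theta - \sum_{n\ge 2} n(\rho_n^1\cos n\theta + \rho_n^2\sin n\theta) + \rho(\theta) + w(\theta)\bigr] + o(\varepsilon)^{0,\gamma'}$, and the algebraic identity $\rho(\theta) - \sum_{n\ge 2} n(\rho_n^1\cos n\theta + \rho_n^2\sin n\theta) = \sum_{n\ge 2}(1-n)(\rho_n^1\cos n\theta + \rho_n^2\sin n\theta)$ yields (\ref{calf-expressions}). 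The ${\mathcal F}^\theta$ formula is obtained analogously, the leading $\tfrac{1}{2}\sin\theta$ originating from $\tilde{\mathcal{K}}_1(\cos\alpha)$ and the $w$-sum from $\mathcal{K}_1\sigma$. The principal obstacle is bookkeeping: the six pieces of $\sigma$ must be fed through several operators, and the pinning conditions $\rho_1^1 = \rho_1^2 = 0$ must enter at precisely the right place for the centrifugal cancellation to go through.
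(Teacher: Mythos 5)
Your proposal is correct and follows essentially the same route as the paper's proof: substitute the expressions of Lemma \ref{linearpart}, decompose $\sigma$ into the pieces handled by the operators $\mathcal{K}_j,\tilde{\mathcal{K}}_j,\hat{\mathcal{K}}_j$ of Lemmas \ref{fourierk1r} and \ref{fourierk2}, check that the averaged (centrifugal) bracket is $o(\varepsilon^0)^{0,\gamma^\prime}$ using $\rho_1^j=0$, and reassemble the Fourier series. Your computation reproduces (\ref{kr-etc}), (\ref{kj-sigma}) and (\ref{ex-average}) exactly (the only quibble is the phrase ``first-mode content'' for $\mathcal{K}_1$-type operators, which in fact keep all modes $n\geq 1$, as your own formulas correctly show).
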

\begin{proof}
From Lemma \ref{fourierk1r} and Lemma \ref{fourierk2} we have
\begin{eqnarray}\label{kr-etc}
{\mathcal K}_1^r+(\frac{1}{\varepsilon}-w) {\mathcal K}_2 &=& -4\pi[\sum_2^\infty n(\rho_n^1\cos{n\theta}+\rho_n^2\sin{n\theta})+\frac{1}{\varepsilon}-w(\theta)]+o(\varepsilon^0)^{0,\gamma^\prime}
\end{eqnarray}
From the definition of $\sigma$ in Lemma \ref{linearpart} and the definition of the operators
$\tilde{{\mathcal K}}_j,\;\hat{{\mathcal K}}_j$ in Lemma \ref{fourierk2} we see that (\ref{coefficientk2}) implies:
\begin{eqnarray}\label{kj-sigma}
{\mathcal K}_1\sigma &=&{\mathcal K}_1(\rho-w)+\tilde{{\mathcal K}}_1(\cos(\cdot)+\cos(\theta))+\hat{{\mathcal K}}_1(\rho+\rho(\theta))\\\nonumber
&=& 4\pi[\sum_1^\infty(-w_n^2\cos{n\theta}+w_n^1\sin{n\theta})+\frac{1}{2}\sin{\theta}]+o(\varepsilon^0)^{0,\gamma^\prime},\\\nonumber
{\mathcal K}_2\sigma &=&{\mathcal K}_2(\rho-w)+\tilde{{\mathcal K}}_2(\cos(\cdot)+\cos(\theta))+\hat{{\mathcal K}}_2(\rho+\rho(\theta))\\\nonumber
&=&4\pi(\rho(\theta)+\frac{1}{2}\cos(\theta))+o(\varepsilon^0)^{0,\gamma^\prime}.
\end{eqnarray}
From (\ref{kr-etc}) and (\ref{kj-sigma}) we obtain
\begin{eqnarray}\label{ex-average}
\langle \cos(\cdot)({\mathcal K}_1^r-w {\mathcal K}_2+{\mathcal K}_2\sigma)-\sin(\cdot){\mathcal K}_1\sigma\rangle &=& o(\varepsilon^0)^{0,\gamma^\prime}.
\end{eqnarray}
Equations (\ref{calf-expressions})$_1$ and (\ref{calf-expressions})$_2$ follow from (\ref{rhslinearpart}) and (\ref{kr-etc}), (\ref{kj-sigma}) and (\ref{ex-average}).
\end{proof}
\begin{rem}\label{effe-expression}
From the estimate (\ref{ex-average}) and (\ref{effe}) it follows
\begin{eqnarray}\label{effe-expression-1}
   \langle\frac{f}{\omega}\rangle &=& -2\pi\omega_0r_0\varepsilon c(\varepsilon) \mathcal{K}_3+o(\varepsilon)^{0,\gamma^\prime}.
\end{eqnarray}
\end{rem}
\section{The proof of Theorem \ref{main}}\label{the-proof}
We let $X$ the set of the pairs $(\rho,w)$ of $2\pi-$periodic functions $\rho\in W^{2,2}(-\pi,\pi),\;w\in  W^{1,2}(-\pi,\pi)$  that satisfy
\begin{eqnarray}\label{x-definition}
\rho_0=0,&&w_0=0,\\\nonumber
\rho_1^j=0,&&j=1,2.
\end{eqnarray}
  $X$ is a Banach space with the norm $\left\|(\rho,w)\right\|_X:=\left\|\rho\right\|_{W^{2,2}}+\left\|w\right\|_{W^{1,2}}$.
We assume throughout that $(\rho,w)$ is bounded by some constant $M$ that will be fixed later:
\begin{eqnarray}\label{m-bound}
\left\|(\rho,w)\right\|_X\leq M.
\end{eqnarray}
We say that a map $h(\rho,w;\theta,\varepsilon)$ that satisfies (\ref{m-bound}) is $O_X(\varepsilon^k)$ if $h$ is such that
 \begin{eqnarray}\label{ssmoothness-x}
   \vert| h(\rho,w)\vert|_{L^2} &\leq& C\varepsilon^k ,\\\nonumber
   \vert| h(\rho_1,w_1)-h(\rho_2,w_2)\vert|_{L^2}&\leq&
   C\varepsilon^k\vert|(\rho_1,w_1)-(\rho_2,w_2)\vert|_X,
 \end{eqnarray}
 for some constant $C>0$ independent of $\varepsilon$.

 If instead (\ref{ssmoothness-x}) holds with a constant $C=C(\varepsilon)$ that   depends on $\varepsilon$ and $\lim_{\varepsilon\rightarrow 0^+} C(\varepsilon)=0$ we say that $h(\rho,w;\alpha,\theta,\varepsilon)=o_X(\varepsilon^k)$.
\begin{rem}\label{ox-oc}
If $(\rho,w)\in X$ then
\begin{eqnarray}\label{ox-oc-1}
 h(\rho,w;\theta,\varepsilon)=O(\varepsilon^k)^{0,\gamma^\prime}&\Rightarrow& h(\rho,w;\theta,\varepsilon)=O_X(\varepsilon^k),\\\nonumber
  h(\rho,w;\theta,\varepsilon)=o(\varepsilon^k)^{0,\gamma^\prime}&\Rightarrow&
  h(\rho,w;\theta,\varepsilon)=o_X(\varepsilon^k)
 \end{eqnarray}
\end{rem}
From the expressions of $\frac{a^r}{\omega},\;\frac{a^\theta}{\omega}$ and (\ref{ro-w-form}) we obtain
\begin{eqnarray}\label{lhs-expressions}
\frac{1}{\omega_0r_0}\frac{a^r}{\omega} &=& -1+\varepsilon(\rho^{\prime\prime}-\rho-w)+\varepsilon{\mathcal N}_a^r,\\\nonumber
\frac{1}{\omega_0r_0}\frac{a^\theta}{\omega} &=& \varepsilon(2\rho^\prime+w^\prime)+\varepsilon{\mathcal N}_a^\theta.
\end{eqnarray}
where ${\mathcal N}_a^r={\mathcal N}_a^r(\rho,w;\theta,\varepsilon)$ and ${\mathcal N}_a^r={\mathcal N}_a^r(\rho,w;\theta,\varepsilon)$ are $o_X(\varepsilon^0)$.
We are now in the position of transforming the equations (\ref{newsystem}) into an infinite set of equations for the Fourier coefficients of the unknowns $\rho=\sum_2^\infty(\rho_n^1\cos{n\theta}+\rho_n^2\sin{n\theta}),\;w=\sum_1^\infty(w_n^1\cos{n\theta}+w_n^2\sin{n\theta})$.

From (\ref{lhs-expressions}) and Proposition \ref{firststep} it follows that we can rewrite (\ref{newsystem}) in the form
\begin{eqnarray}\label{newsystem-explicit}
\hskip1.5cm &&-1+\varepsilon(\rho^{\prime\prime}-\rho-w) =-4\pi c(\varepsilon)\\\nonumber
 && +4\pi\varepsilon c(\varepsilon)[\frac{1}{2}\cos{\theta}+\sum_2^\infty (1-n)(\rho_n^1\cos{n\theta}+\rho_n^2\sin{n\theta})+w(\theta)] +\varepsilon{\mathcal N}^r,\\\nonumber
&&\varepsilon(2\rho^\prime+w^\prime)\\\nonumber &&=4\pi\varepsilon c(\varepsilon)[\sum_1^\infty(-w_n^2\cos{n\theta}+w_n^1\sin{n\theta})+\frac{1}{2}\sin{\theta}] +\varepsilon{\mathcal N}^\theta,
\end{eqnarray}
where we have set ${\mathcal N}^r={\mathcal N}_F^r-{\mathcal N}_a^r$ and ${\mathcal N}^\theta={\mathcal N}_F^\theta-{\mathcal N}_a^\theta$.
 From Remark \ref{ox-oc} and (\ref{lhs-expressions}) and Proposition \ref{firststep} it follows
 that ${\mathcal N}^r$ and ${\mathcal N}^\theta$ are $o_X(\varepsilon^0)$.

 By taking the inner product of (\ref{newsystem-explicit}) with $\frac{1}{\sqrt{2\pi}}$
  we obtain
\begin{eqnarray}\label{c-epsilon}
c(\varepsilon)=\frac{1}{4\pi}-\frac{1}{8\pi}\varepsilon{\mathcal N}^r_0.
\end{eqnarray}
As we have remarked after introducing system (\ref{newsystem}), on the basis of Lemma \ref{can-be-done} we don't need to consider the projection of (\ref{newsystem})$_2$ on the subspace of constant functions.
 Using (\ref{c-epsilon}) we rewrite (\ref{newsystem-explicit}) in the form
\begin{eqnarray}\label{newsystem-explicit-1}
\hskip1.5cm\rho^{\prime\prime}-\rho-w -[\sum_2^\infty (1-n)(\rho_n^1\cos{n\theta}+\rho_n^2\sin{n\theta})+w(\theta)]&=& \frac{1}{2}\cos{\theta}+\tilde{{\mathcal N}}^r,\\\nonumber
2\rho^\prime+w^\prime - [\sum_1^\infty(-w_n^2\cos{n\theta}+w_n^1\sin{n\theta})]&=&  \frac{1}{2}\sin{\theta}+\tilde{{\mathcal N}}^\theta,
\end{eqnarray}
where $\tilde{{\mathcal N}}^r$ and $\tilde{{\mathcal N}}^\theta$ are $=o_X(\varepsilon^0)$ and  $\tilde{{\mathcal N}}^r_0=0$. On the basis of Lemma \ref{can-be-done} and (\ref{newidentity}) we can also assume $\tilde{{\mathcal N}}^\theta_0=0$. Moreover we can replace (\ref{newsystem-explicit-1})$_2$ with its projection on the orthogonal complement of the subspace generated by $\cos\theta$ and $\sin\theta$. Indeed the projection of (\ref{newsystem-explicit-1})$_2$ on this subspace is an automatic consequence of (\ref{newsystem})$_1$. This follows from the identities (\ref{identities}) that imply $(\frac{a^r}{\omega}-{\mathcal F}^r)_1^1=(\frac{a^\theta}{\omega}-{\mathcal F}^\theta)_1^1$ and $(\frac{a^r}{\omega}-{\mathcal F}^r)_1^2=-(\frac{a^\theta}{\omega}-{\mathcal F}^\theta)_1^2$ and from the particular structure of the r.h.s. of (\ref{newsystem})$_2$. Therefore we can assume $(\tilde{{\mathcal N}}^\theta)_1^j=0,\,j=1,2$ and rewrite (\ref{newsystem-explicit-1}) in the form
\begin{eqnarray}\label{newsystem-explicit-2}
\hskip.8cm\sum_2^\infty(-n^2+n-2)(\rho_n^1\cos{n\theta}+\rho_n^2\sin{n\theta})-2\sum_1^\infty
(w_n^1\cos{n\theta}+w_n^2\sin{n\theta})=\tilde{{\mathcal N}}^r,\\\nonumber
\sum_2^\infty 2 n
(\rho_n^2\cos{n\theta}-\rho_n^1\sin{n\theta})+\sum_2^\infty 2 n
(w_n^2\cos{n\theta}-w_n^1\sin{n\theta})=\tilde{{\mathcal N}}^\theta,
\end{eqnarray}
where we have also used the Fourier series of $\rho,\,w,\,\rho^\prime,\,w^\prime,\,\rho^{\prime\prime}$.

Let $Y$ the set of pairs $(p,v)$ of $2\pi-$periodic functions $p,\,v\in L^2(-\pi,\pi)$ that satisfy (\ref{x-definition}) (with $\rho=p,\,w=v$). $Y$ is a Banach space with the norm $\left\|(p,v)\right\|_Y:=\left\|p\right\|_{L^2}+\left\|v\right\|_{L^2}$.
\begin{lem}\label{linear-operator}
Let $L(\rho,w)$ the l.h.s. of (\ref{newsystem-explicit-2}). Then
\begin{description}
\item[(i)] $L(\rho,w)\in Y,\;\;\;(\rho,w)\in X.$
\item[(ii)] The map $L:X\rightarrow Y$ defined by
\begin{eqnarray}\label{linear-operator-1}
X\ni (\rho,w)\rightarrow L(\rho,w)\in Y,
\end{eqnarray} is linear and bounded and has a bounded inverse
 $L^{-1}:Y\rightarrow X$.
\end{description}
\end{lem}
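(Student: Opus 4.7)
The operator $L$ is block-diagonal in the Fourier basis, with each nontrivial block of dimension at most two. The plan is to identify these blocks, verify invertibility frequency-by-frequency, and combine the resulting estimates to obtain boundedness of both $L$ and $L^{-1}$.

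Part (i) and the boundedness of $L:X\to Y$ are immediate coefficient estimates. Reading off (\ref{newsystem-explicit-2}), the Fourier coefficients of the first component of $L(\rho,w)$ are of order $n^2|\rho_n^j|+|w_n^j|$, and those of the second of order $n|\rho_n^j|+n|w_n^j|$. Summing in $\ell^2$ gives $\|L_1(\rho,w)\|_{L^2}\lesssim \|\rho\|_{W^{2,2}}+\|w\|_{L^2}$ and $\|L_2(\rho,w)\|_{L^2}\lesssim \|\rho\|_{W^{1,2}}+\|w\|_{W^{1,2}}$, so $L:X\to Y$ is bounded. The zero-average and fix-centre conditions defining $Y$ hold automatically from the ranges of the summation indices on the left-hand side of (\ref{newsystem-explicit-2}).

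The main step is constructing $L^{-1}$. For each $n\ge 2$, projecting the equation $L(\rho,w)=(p,v)$ onto $\cos n\theta$ and $\sin n\theta$ decouples the unknowns into two independent $2\times 2$ linear systems: one coupling $(\rho_n^1,w_n^1)$ with $(p_n^1,v_n^2)$, the other coupling $(\rho_n^2,w_n^2)$ with $(p_n^2,v_n^1)$. A short calculation shows each determinant is nonzero and of order $n^3$, so Cramer's rule gives explicit inverses of the schematic form $\rho_n^j=O(n^{-2})p_n+O(n^{-3})v_n$ and $w_n^j=O(n^{-2})p_n+O(n^{-1})v_n$. The lowest-frequency modes are degenerate and must be handled separately: at $n=1$ only the first equation contributes, because the $n=1$ mode of the second equation was projected out in deriving (\ref{newsystem-explicit-2}), and it gives $w_1^j$ directly in terms of $p_1^j$. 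Weighting the resulting formulas by $n^4$ for $\rho$ and $n^2$ for $w$ and summing produces $\|(\rho,w)\|_X^2\lesssim \|p\|_{L^2}^2+\|v\|_{L^2}^2$, so $L^{-1}:Y\to X$ is well-defined and bounded.

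The one nonroutine point is the matching of $n$-powers. The space $X$ has asymmetric regularity, two derivatives on $\rho$ but only one on $w$, and this asymmetry must be recovered exactly from the block inverse. It does, because the lower row of each $2\times 2$ block carries a factor $n$, so the $(w,v)$ entry of the inverse is $O(n^{-1})$, lifting $w$ one derivative from $v\in L^2$; meanwhile the upper-left entry $-n^2+n-2$ carries a factor $n^2$, so the $(\rho,p)$ entry is $O(n^{-2})$, lifting $\rho$ two derivatives from $p\in L^2$. Once this bookkeeping is in place the explicit formulas establish bijectivity and continuity of $L^{-1}$ simultaneously, so no appeal to the open mapping theorem is actually needed (though it would close the argument at once after bijectivity).
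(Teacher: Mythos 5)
Your argument is correct and is essentially the paper's own proof: boundedness of $L$ by reading off the coefficient growth, then mode-by-mode inversion with the same block structure — $w_1^j$ determined directly by the first equation at $n=1$, and for $n\ge 2$ the two $2\times 2$ systems coupling $(\rho_n^1,w_n^1)$ with $(p_n^1,v_n^2)$ and $(\rho_n^2,w_n^2)$ with $(p_n^2,v_n^1)$, whose determinant $n^3-3n+2=(n-1)^2(n+2)$ is nonzero and of order $n^3$, yielding exactly the paper's estimates $\vert\rho_n^j\vert\lesssim n^{-2}\vert p_n\vert+n^{-3}\vert v_n\vert$, $\vert w_n^j\vert\lesssim n^{-2}\vert p_n\vert+n^{-1}\vert v_n\vert$ and hence $\rho\in W^{2,2}$, $w\in W^{1,2}$. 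Your derivative-counting remark matches the paper's conclusion, and like the paper you obtain the bounded inverse explicitly without the open mapping theorem.
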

\begin{proof}
By inspecting (\ref{newsystem-explicit-2}) and by observing that $(\rho,w)\in X$ implies that
the series in (\ref{newsystem-explicit-2}) are well defined $L^2$ functions proves {\rm(i)} and also that $L$ is bounded.  To show that $L^{-1}$ exists and is bounded we note that from (\ref{newsystem-explicit-2}) the equation
\begin{eqnarray}\label{solving-linear}
 L(\rho,w)=(p,v),
 \end{eqnarray}
 is equivalent to the system
\begin{eqnarray}\label{w1-w2}
\left\{\begin{array}{l}
 w_1^1=p_1^1,\\
 w_1^2=p_1^2,
\end{array}\right.\hskip1.5cm
\end{eqnarray}

\begin{eqnarray}\label{wn-rhon-1}
\left\{\begin{array}{l}
(2-n+n^2)\rho_n^1+2 w_n^1=-p_n^1,\\
(2-n+n^2)\rho_n^2+2 w_n^2=-p_n^2,
\end{array}\right.,\;\;n\geq 2,
\end{eqnarray}

\begin{eqnarray}\label{wn-rhon-2}
\left\{\begin{array}{l}
2 n\rho_n^2+(n+1) w_n^2=w_n^1,\\
2 n\rho_n^1+(n+1) w_n^1=w_n^2,
\end{array}\right.,\;\;n\geq 2,\hskip.7cm
\end{eqnarray}
where (\ref{w1-w2}) and (\ref{wn-rhon-1}) follow from (\ref{newsystem-explicit-1})$_1$ and (\ref{wn-rhon-2})
from (\ref{newsystem-explicit-1})$_2$.
Equations (\ref{wn-rhon-1}) and (\ref{wn-rhon-2}) imply
\begin{eqnarray}\label{ron-wn-explicit}
\left(\begin{array}{l}
\rho_n^1\\w_n^1
\end{array}\right)=\frac{1}{2-3n+n^3}\left(\begin{array}{lr}
n+1 & -2\\
-2 n & 2-n+n^2
\end{array}\right)\left(\begin{array}{l}
-p_n^1\\v_n^2
\end{array}\right),\\\nonumber
\left(\begin{array}{l}
\rho_n^2\\w_n^2
\end{array}\right)=\frac{1}{2-3n+n^3}\left(\begin{array}{lr}
n+1 & -2\\
-2 n & 2-n+n^2
\end{array}\right)\left(\begin{array}{l}
-p_n^2\\v_n^1
\end{array}\right),
\end{eqnarray}
and therefore
\begin{eqnarray}\label{ro-w-estimate}
\left\{\begin{array}{l}
\vert\rho_n^j\vert\leq\frac{C}{n^2}\sum_{i=1,2}(\vert p_n^i\vert+\frac{1}{n}\vert v_n^i\vert),\\\\
\vert w_n^j\vert\leq\frac{C}{n}\sum_{i=1,2}(\frac{1}{n}\vert p_n^i\vert+\vert v_n^i\vert),
\end{array}\right.,\;\;n\geq 2.
\end{eqnarray}
These inequalities and (\ref{w1-w2}) show that the Fourier coefficients $\rho_n^j,\,j=1,2;\,n=2,\cdots$ and  $w_n^j,\,j=1,2;\,n=1,\cdots$  define functions $\rho\in W^{2,2}$ and $w\in W^{1,2}$ that satisfy (\ref{x-definition}).
\end{proof}
From Lemma \ref{linear-operator} it follows that (\ref{newsystem-explicit-2}) is equivalent
to the equation
\begin{eqnarray}\label{fixed-point-equation}
(\rho,w)=(0,\bar{w})+L^{-1}\left(\begin{array}{l}
\tilde{{\mathcal N}}^r(\rho,w)\\\tilde{{\mathcal N}}^\theta(\rho,w)
\end{array}\right):=G(\rho,w),
\end{eqnarray}
where we have set $\bar{w}=-\frac{1}{4}\cos\theta$. Therefore the problem of solving (\ref{newsystem-explicit-2}) is reduced to the existence of a fixed point for the map $G:X\rightarrow X$.

Fix $M=2\left\|(0,\bar{w})\right\|_X$ in (\ref{m-bound}) and let $\overline{X}=\{(\rho,w)\in X: \left|(\rho,w)\right\|_X\leq M\}$. Then from the fact that $\tilde{{\mathcal N}}^r$ and $\tilde{{\mathcal N}}^\theta$ are $o_X(\varepsilon^0)$ and from Lemma \ref{linear-operator} we have:
\begin{eqnarray}\label{g-in-x}
(\rho,w)\in\overline{X}\Rightarrow\hskip2cm &&\\\nonumber \left\|G(\rho,w)\right\|_X &\leq&\left|(0,\bar{w})\right\|_X+C(\varepsilon),\\\nonumber
\left\|G(\rho,w)-G(\tilde{\rho},\tilde{w})\right\|_X &\leq& C(\varepsilon) \left\|(\rho,w)-(\tilde{\rho},\tilde{w})\right\|_X.
\end{eqnarray}
Therefore, for $\varepsilon>0$ smaller than some $\varepsilon_0>0$, $G:\overline{X}\rightarrow \overline{X}$ is a contraction and (\ref{fixed-point-equation}) has a unique solution $(\rho^*,w^*)\in \overline{X}$. Due to the equivalence between (\ref{fixed-point-equation}) and (\ref{newsystem-explicit-2}), $(\rho^*,w^*)$ is a solution of (\ref{equilibrium}), (\ref{continuity1}).  Moreover (\ref{g-in-x}) imply:
\begin{eqnarray}\label{rho-w-bounds-fix}
\left\|\rho^*\right\|_{W^{2,2}}\leq C(\varepsilon),\\\nonumber
\left\|w^*-\bar{w}\right\|_{W^{1,2}}\leq C(\varepsilon),
\end{eqnarray}
where $C(\varepsilon)\rightarrow 0$ as $\varepsilon\rightarrow 0^+$. From these estimates, Lemma \ref{integrands} and Lemma \ref{mapdefinedbyk} it follows that the r.h.s. of system (\ref{system}) computed for $(\rho,w)=(\rho^*,w^*)$ is of class $C^{0,\gamma}$. Therefore we can regard (\ref{system}) as a system of ode with a H$\ddot{\rm o}$lder continuous r.h.s. This shows that $\rho^*\in C^{2,\gamma}$ and $w^*\in C^{1,\gamma}$. The uniqueness in {\rm(iii)} is just a restatement of the fact that that $(\rho^*,w^*)$ is the unique fixed point of a contraction on $\overline{X}$. The properties of $s$ in {\rm(iv)} follow (\ref{continuity1}) and from {\rm(i)} and {\rm(ii)} recalling also the expression of $c(\varepsilon)$ in Lemma \ref{linearpart} and (\ref{c-epsilon}). The statement on the smoothness of $\Omega$ follows from (\ref{jeieomega})$_2$ and {\rm(i)} and {\rm(ii)}. The expression of $\overline{\Omega}$ follows from (\ref{jeieomega})$_2$ and from Remarks \ref{effe-expression-1} and \ref{k3-remark}. The proof of Theorem \ref{main} is concluded.

\section*{Acknowledgment}
It is a pleasure for G. Fusco and P. Negrini to thank the CAMGSD, Department of Mathematics at IST for the kind hospitality during the preparation of the present work.
\nocite{*}

\bibliographystyle{plain}

\end{document}